\documentclass[journal,twoside,10pt]{IEEEtran}
\usepackage{amsmath}
\usepackage{mathrsfs}
\usepackage{amsfonts}
\usepackage[numbers,sort&compress]{natbib}
\usepackage{graphicx,color,overpic}
\usepackage{amsmath}
\usepackage{times}
\usepackage{latexsym}
\usepackage{bm}
\usepackage{amssymb}
\usepackage{cases}
\usepackage{array}
\usepackage{fancyhdr}
\usepackage{setspace}

\usepackage[caption=false,font=footnotesize,labelfont=rm,textfont=rm]{subfig}
\usepackage{subfig}
\usepackage{url}
\usepackage{algorithm}
\usepackage{algorithmic}
\usepackage{multirow}
\allowdisplaybreaks[4]

\usepackage{citesort}
\usepackage{epsfig}
\usepackage{fancybox}
\usepackage{textcomp}
\usepackage{multirow}
\usepackage{setspace}
\usepackage{psfrag}
\usepackage{makecell}

\newtheorem{theorem}{Theorem}
\newtheorem{lemma}{Lemma}
\newtheorem{remark}{Remark}

    \def\Complex{{\rm\rule[.23ex]{.03em}{1.1ex}\kern-.3em{C}}}

    \newcommand{\be}{\begin{equation}} \newcommand{\ee}{\end{equation}}
    \newcommand{\bea}{\begin{eqnarray}} \newcommand{\eea}{\end{eqnarray}}
    \newcommand{\benum}{\begin{enumerate}} \newcommand{\eenum}{\end{enumerate}}

    \newcommand{\qa}{{\bf a}}
    \newcommand{\qb}{{\bf b}}
    
    \newcommand{\qd}{{\bf d}}

    \newcommand{\qg}{{\bf g}}
    \newcommand{\qh}{{\bf h}}

    \newcommand{\qn}{{\bf n}}

    \newcommand{\qu}{{\bf u}}

    \newcommand{\qx}{{\bf x}}
    \newcommand{\qy}{{\bf y}}

    \newcommand{\qA}{{\bf A}}
    \newcommand{\qB}{{\bf B}}
    \newcommand{\qC}{{\bf C}}
    \newcommand{\qD}{{\bf D}}

    \newcommand{\qG}{{\bf G}}
    
    \newcommand{\qI}{{\bf I}}

    \newcommand{\qQ}{{\bf Q}}
    \newcommand{\qR}{{\bf R}}

    \newcommand{\bbR}{{\mathbb R}}
    \newcommand{\bbC}{{\mathbb C}}

\begin{document}

\title{Low-Complexity Channel Estimation Framework for Non-Square UPA-Assisted XL-MIMO Systems}
\author{Yilong Liu,~\IEEEmembership{Graduate Student Member,~IEEE},
Xi Yang,~\IEEEmembership{Member,~IEEE},
Binggui Zhou,~\IEEEmembership{Member,~IEEE},

Yu Han,~\IEEEmembership{Member,~IEEE},
Ting Liu,~\IEEEmembership{Member,~IEEE},
and Shaodan Ma,~\IEEEmembership{Senior Member,~IEEE}
\thanks{Yilong Liu and Xi Yang are with the School of Information and Electronic Engineering, East China Normal University, Shanghai 200241, China (e-mail: yilongliu@stu.ecnu.edu.cn; xyang@cee.ecnu.edu.cn).}
\thanks{Binggui Zhou is with the Department of Electrical and Electronic Engineering, Imperial College London, SW7 2AZ London, U.K. (e-mail: binggui.zhou@imperial.ac.uk).}
\thanks{Yu Han is with the National Mobile Communications Research Laboratory, Southeast University, Nanjing 210096, China (e-mail: hanyu@seu.edu.cn).}
\thanks{Ting Liu is with the School of Artificial Intelligence, Nanjing University of Information Science and Technology, Nanjing 210044, China (e-mail: liuting@nuist.edu.cn).}
\thanks{Shaodan Ma is with the State Key Laboratory of Internet of Things for Smart City and the Department of Electrical and Computer Engineering, University of Macau, Macao SAR, China (e-mail: shaodanma@um.edu.mo).}
\thanks{A conference version of this work has been submitted to IEEE GLOBECOM 2026 \cite{YLiu-26GLOBECOM}.}
}
\maketitle

\begin{abstract}
Low-complexity channel state information acquisition is crucial for extremely large-scale multiple-input multiple-output (XL-MIMO) systems.
However, practical deployments of non-square uniform planar arrays (UPAs) in hybrid-field environments face prohibitive computational complexity and degraded estimation accuracy due to limited elevation angle-of-arrival (AoA) resolution and deteriorated channel sparsity.
To tackle these challenges, we propose a low-complexity channel estimation framework.
First, an antenna-domain extrapolation scheme synthesizes a virtually enlarged vertical aperture via the spatial correlation among adjacent elements, breaking the elevation resolution limit.
The framework then disentangles the parameter coupling by transforming the two-dimensional joint search into two sequential one-dimensional searches.
Specifically, elevation AoAs are extracted via an extrapolation-enhanced discrete Fourier transform-Newtonized orthogonal matching pursuit (NOMP) algorithm along the virtually enlarged vertical uniform linear array (ULA), while azimuth AoAs, ranges, and gains are acquired utilizing a discrete fractional Fourier transform-NOMP algorithm along a horizontal ULA.
A subspace fitting-driven path matching algorithm pairs these decoupled parameters.
To overcome the accuracy bottleneck of the antenna-domain scheme, a correlation-domain extrapolation scheme is further developed by exploiting the structural properties of the spatial correlation matrix to decouple the near-field quadratic and azimuth phase components, yielding a noise-suppressed virtual array.
Numerical results validate the effectiveness of the proposed framework.
\end{abstract}

\begin{IEEEkeywords}
       Channel estimation, XL-MIMO, hybrid-field, UPA, aperture extrapolation.
\end{IEEEkeywords}

\section{Introduction}
Extremely large-scale multiple-input multiple-output (XL-MIMO) has emerged as a pivotal technology for sixth-generation (6G) mobile communication networks to fulfill the stringent performance requirements of future 6G applications \cite{Lu-24COMST}.
In XL-MIMO systems, the enlarged array aperture extends the Rayleigh distance substantially.
Consequently, a portion of the scatterers falls within the near-field region, while the rest remain in the far-field region, constituting a hybrid-field propagation environment \cite{HLei-24TSP}, \cite{WD-22CL}.

To fully exploit the performance gains of XL-MIMO systems, the acquisition of accurate channel state information (CSI) is of vital importance, which is further utilized for effective beamforming and subsequent signal processing \cite{JCao-26TWC}, \cite{SZeng-25WC}.
However, the extremely large number of antennas, combined with the parameter coupling of angles of arrival (AoAs) and ranges in hybrid-field environments, incurs unaffordable computational complexity for channel estimation.
By leveraging the channel sparsity, compressed sensing-based methods circumvent the exhaustive multi-dimensional parameter searches and matrix inversions required by conventional estimators, such as the least squares (LS) and minimum mean square error methods \cite{SYue-24TWC}, \cite{YKang-25TCOM}.
For instance, the authors in \cite{WD-22CL} developed a hybrid-field orthogonal matching pursuit (OMP) algorithm, which sequentially estimates far-field and near-field paths within the angular and polar domains, respectively.
Based on the sparsity of hybrid-field channels in the fractional Fourier domain, \cite{XYang-24WCL} proposed a gridless estimation algorithm.
The authors in \cite{HWang-26TCOMM} demonstrated that the hybrid-field channel admits a block-sparse representation over a specially designed unitary matrix, and subsequently employed block-sparse signal recovery algorithms for channel estimation.

Nevertheless, the aforementioned algorithms assume a uniform linear array (ULA) deployed at the base station (BS).
In practical deployments, uniform planar arrays (UPAs) are widely adopted as the prevailing architecture, owing to their capability to achieve high antenna integration densities within constrained physical spaces \cite{CChen-26TWC}, \cite{3GPP-38.901}.
The extremely large scale of UPAs, combined with near-field spherical wavefronts, induces three-dimensional (3D) parameter coupling among azimuth AoAs, elevation AoAs, and ranges, rendering conventional joint estimation approaches computationally prohibitive \cite{Guo-23TVT}.
Thus, the development of low-complexity channel estimation methods for UPA-assisted XL-MIMO systems is highly warranted.
Although several low-complexity schemes have been developed for near-field UPA-assisted XL-MIMO systems, such as the parameter decoupling-based methods \cite{Guo-23TVT}, \cite{HMiao-25ICC}, \cite{Fazal-25TCOM}, their application to hybrid-field scenarios is restricted by the energy leakage in the spatial domain and the energy spread in the polar domain \cite{XYang-24WCL}, \cite{YLi-25TWC}.
Then, \cite{WYu-23JSTSP} proposed a fixed point network framework to enable low-overhead and adaptive hybrid-field channel estimation tailored for the planar array-of-subarray architecture.
\cite{PZheng-25TWC} investigated channel estimation for reconfigurable intelligent surface-aided XL-MIMO systems under hybrid-field scenarios, and developed a convolutional dictionary learning framework.
Whereas, these channel estimation algorithms predominantly consider square arrays.
In practice, XL-MIMO systems typically deploy non-square UPAs \cite{3GPP-38.901}, \cite{XYang-26JIOT}, \cite{YuH-19TCOM}.
For example, non-square UPAs with larger horizontal apertures but relatively small vertical apertures are employed to efficiently accommodate users densely distributed in the azimuth plane.
However, such configuration restricts the physical array aperture in the elevation dimension, aggravating the Rayleigh resolution limit and degrading the elevation AoA estimation accuracy.
In the presence of the 3D parameter coupling induced by the extremely large UPA scale and hybrid-field propagation, such inaccurate elevation AoA estimation severely hinders the decoupling of the horizontal and vertical ULAs from the UPA.
Although subspace-based super-resolution techniques, such as the multiple signal classification (MUSIC) algorithm, can be utilized to break the Rayleigh resolution limit, the limited vertical aperture renders their subspace orthogonality highly sensitive to ambient noise and finite snapshots, leading to spectral peak aliasing at moderate-to-low SNRs \cite{Pakrooh-16TSP}.
Moreover, practical hardware constraints dictate that the number of radio frequency (RF) chains is much smaller than the number of UPA antennas, yielding undersampled spatial observations that further complicate accurate channel estimation \cite{Sohrabi-16JSTSP}.

To tackle these challenges, we propose a low-complexity channel estimation framework for non-square UPA-assisted XL-MIMO systems.
By exploiting the spatial correlation among adjacent antenna elements, an antenna-domain extrapolation scheme is firstly developed to synthesize an enlarged vertical aperture.
Building upon this extrapolated array, a channel estimation algorithm is then proposed to resolve the 3D parameter coupling by transforming the prohibitive two-dimensional (2D) joint search into sequential one-dimensional (1D) searches, thereby significantly reducing the computational complexity.
To overcome the accuracy bottleneck of the antenna-domain scheme originating from the recursive noise accumulation, a correlation-domain extrapolation scheme is further proposed to construct a noise-suppressed virtual array based on the statistical CSI.
The primary contributions are summarized as follows.
\begin{itemize}
\item First, we propose a linear prediction-based antenna-domain extrapolation scheme to enlarge the vertical aperture without extra hardware costs, facilitating the decoupling of the UPA into a horizontal and a virtually enlarged vertical ULA.
By exploiting the spatial correlation among adjacent elements, this scheme recursively generates unobserved spatial samples to expand the observation window beyond the physical array boundaries.
Furthermore, we derive the Cram\'er-Rao bound (CRB) for elevation AoA estimation to analyze the trade-off between spatial resolution enhancement and recursive error accumulation, thereby revealing the theoretical limit of the maximum achievable extrapolated aperture.
\item Then, we propose a low-complexity channel estimation algorithm based on the extrapolated array to sequentially extract the multipath parameters by decoupling the UPA into horizontal and virtually enlarged vertical ULA pairs.
An extrapolation-enhanced discrete Fourier transform-Newtonized orthogonal matching pursuit (Ex-DFT-NOMP) algorithm is developed to extract elevation AoAs while mitigating off-grid errors along the virtually enlarged vertical ULA.
Subsequently, by employing the sparsity in the fractional Fourier domain, a hybrid-field discrete fractional Fourier transform (DFrFT)-NOMP algorithm is executed to acquire the azimuth AoAs, ranges, and complex gains along a horizontal ULA.
A subspace fitting-driven path matching algorithm is further formulated to associate these decoupled parameters.
\item Finally, we develop a correlation-domain extrapolation scheme to achieve higher estimation accuracy.
This scheme exploits the structural properties of the spatial correlation matrix to decouple the near-field quadratic, azimuth, and elevation phase components, and construct a noise-suppressed virtual array for subsequent aperture extrapolation and channel estimation.
To mitigate the central noise impulse generated during this matrix construction, we propose a bidirectional linear prediction-based interpolation strategy.
We also derive the CRB for this scheme to validate its noise-suppression capability.
Numerical results validate the effectiveness and superiority of the proposed framework.
\end{itemize}

The remainder of this paper is organized as follows.
Section\,\ref{s:Model} introduces the considered non-square UPA-assisted XL-MIMO system, and formulates the hybrid-field channel estimation problem.
In Section\,\ref{s:APEXante}, the antenna-domain extrapolation scheme is proposed, and we propose a low-complexity channel estimation algorithm in Section\,\ref{s:channel_est}.
Then, the correlation-domain extrapolation scheme is proposed in Section\,\ref{s:APEXcorr}.
Section\,\ref{s:Simulations} presents the numerical results, and finally we conclude this paper in Section\,\ref{s:Conclusion}.

\textit{Notations:} Scalars, vectors, and matrices are denoted by non-boldface, boldface lowercase, and boldface uppercase letters, respectively.
Superscripts $(\cdot)^T$, $(\cdot)^H$, and $(\cdot)^{-1}$ denote the transpose, Hermitian transpose, and inverse, respectively.
$\mathbb{E}(\cdot)$, $\mathrm{tr}(\cdot)$, $\mathrm{diag}(\cdot)$, $\arg(\cdot)$, $\lfloor \cdot \rfloor$, $\mathcal{F}^{(q)}\{\cdot\}$, and $\mathcal{F}^{(1)}\{\cdot\}$ is the statistical expectation, trace, diagonal, phase, floor, $q$-order DFrFT, and DFT operations, respectively.
$\bbC^{M\times N}$ and $\bbR^{M\times N}$ denote the spaces of $M \times N$ complex and real matrices, respectively.
Hadamard product, Kronecker product, and magnitude operator are denoted by $\odot$, $\otimes$, and $|\cdot|$, respectively.
$[\qA]_{i,j}$ denotes the $(i,j)$-th element of the matrix $\qA$.
$[\qa]_i$, $[\qa]_{\mathcal{S}}$, and $\|\qa\|$ denote the $i$-th element, the subvector, and the Euclidean norm of the vector $\qa$, respectively.
$\Re(\cdot)$ and $\Im(\cdot)$ denote the real and imaginary parts, respectively.
$\mathbf{0}_{M \times 1}$ denotes an $M \times 1$ all-zero vector.
$\mathcal{C}\mathcal{N}(\mu, \sigma^2)$ denotes a complex Gaussian distribution with mean $\mu$ and variance $\sigma^2$.
$\mathcal{U}(a,b)$ represents the uniform distribution over the interval $[a, b]$.

\section{System Model and Problem Formulation}\label{s:Model}
\subsection{System Model}
\begin{figure}[htbp]
       \centering
       \includegraphics[width=0.36\textwidth]{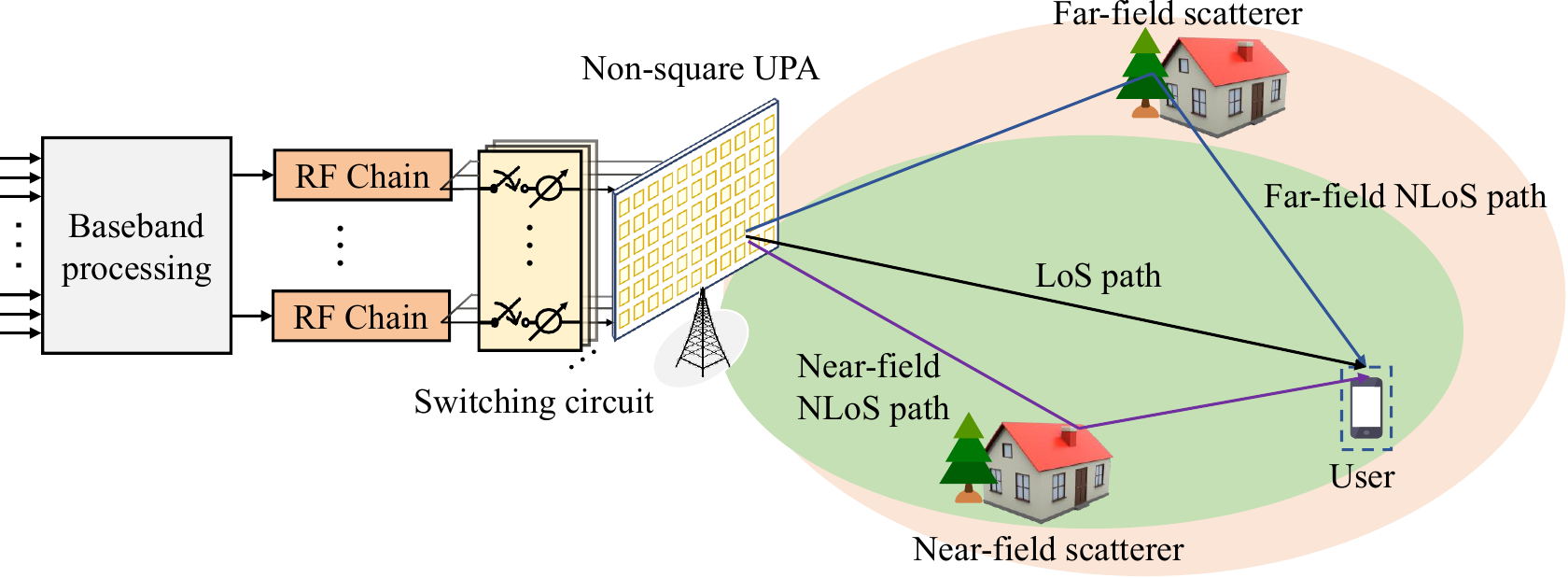}
       \caption{The non-square UPA-assisted hybrid-field XL-MIMO system.}\label{fig:sys_model}
\end{figure}

As depicted in Fig.\,\ref{fig:sys_model}, we consider an XL-MIMO system featuring an antenna selection architecture.\footnote{The proposed scheme can be readily extended to alternative architectures, such as fully-connected and sub-array-based hybrid structures.}
The BS, equipped with a non-square UPA consisting of $M=M_1 \times M_2$ antennas, serves a single-antenna user, where $M_1$ and $M_2$ denote the numbers of antennas along the vertical and horizontal directions, respectively, and $M_1 < M_2$.
Under the considered antenna selection architecture, the $M$ antennas are connected to $N$ ($N \ll M$) RF chains via a switching network, allowing at most $N$ antennas to be activated simultaneously.
Without loss of generality, a random set of $N$ antennas is assumed to be activated.
Given the large array aperture at the BS, the propagation environment is characterized by the hybrid-field scattering, i.e., there are near-field and far-field scatterers in the propagation environment simultaneously \cite{WD-22CL}.
Let $s = 1$ denote the uplink pilot symbol transmitted by the user,\footnote{Although a single-user scenario with $s=1$ is assumed for brevity, the proposed framework can be straightforwardly generalized to multi-user scenarios by assigning orthogonal pilot sequences to different users \cite{CD-22TCOMM}, \cite{Marzetta-10TWC}.} the received signal $\qy_1 \in \bbC^{M \times 1}$ at the BS is modeled as
\begin{align}\label{eq:y1}
       \qy_1 = \qu \odot (\qh_1 s + \qn_1),
\end{align}
where $\qu \in \{0,1\}^{M \times 1}$ is the antenna selection indicator vector, with its $N$ non-zero elements corresponding to the activated antennas,
$\qn_1 \sim \mathcal{C}\mathcal{N}(\mathbf{0}_{M \times 1}, \sigma^2_\mathrm{n}\qI_M)$ represents the received zero-mean complex additive white Gaussian noise (AWGN) vector,
and $\qh_1 \in \bbC^{M \times 1}$ denotes the channel vector between the user and the BS, which is formulated as
\begin{align}\label{eq:h1}
       \qh_1 = \sqrt{\frac{M}{K}} \left(\sum_{k=1}^{K_\mathrm{f}} \beta_k \qa(\theta_k, \phi_k) +\sum_{k=K_\mathrm{f}+1}^{K} \beta_k \qb(\theta_k, \phi_k, r_k)\right),
\end{align}
where $\beta_k \sim \mathcal{C}\mathcal{N}(0, \sigma^2_k)$ is the complex gain of the $k$-th path. 
The parameters $K_\mathrm{f}$ and $K_\mathrm{n}$ denote the numbers of far-field and near-field paths, respectively, and we have $K = K_\mathrm{f} + K_\mathrm{n}$.
$\theta_k$, $\phi_k$, and $r_k$ represent the elevation AoA, azimuth AoA, and range for the $k$-th path, respectively.
$\qa(\theta, \phi) \in \bbC^{M \times 1}$ and $\qb(\theta, \phi, r) \in \bbC^{M \times 1}$ is the far-field and the near-field steering vectors, respectively, which can be modeled as $\qa(\theta, \phi) = \qa_1(\theta) \otimes \qa_2(\theta, \phi)$ and $\qb(\theta, \phi, r) = \qa_1(\theta) \otimes \qb_2(\theta, \phi, r)$ with
\begin{subequations}
\begin{align}
       [\qa_1(\theta)]_i =& \frac{1}{\sqrt{M_1}} e^{j \frac{2\pi a_{\mathrm{v},i}d \sin \theta}{\lambda}},\\
       [\qa_2(\theta, \phi)]_i =& \frac{1}{\sqrt{M_2}} e^{j \frac{2\pi a_{\mathrm{h},i}d \cos \theta \cos \phi}{\lambda}},\\
       [\qb_2(\theta, \phi, r)]_i =& \frac{1}{\sqrt{M_2}} e^{j \frac{2\pi}{\lambda}(a_{\mathrm{h},i}d \cos \theta \cos \phi - a_{\mathrm{h},i}^2\frac{d^2(1-\cos^2 \theta \cos^2 \phi)}{2r})},
\end{align}
\end{subequations}
where $\lambda$ is the carrier wavelength, $d$ denotes the antenna spacing at the BS, $a_{\mathrm{v},i} \in \mathcal{M}_1$ and $a_{\mathrm{h},i} \in \mathcal{M}_2$ denote the vertical and horizontal antenna indices, with the index sets defined as $\mathcal{M}_1 \triangleq \{-\frac{M_1-1}{2}, \dots, \frac{M_1-1}{2}\}$ and $\mathcal{M}_2 \triangleq \{-\frac{M_2-1}{2}, \dots, \frac{M_2-1}{2}\}$, respectively.
Note that $\qb(\theta, \phi, r)$ is obtained by jointly exploiting the Fresnel approximation for the horizontal direction \cite{YPan-23JSTSP} and the planar wavefront assumption for the vertical direction, which holds since $M_1 < M_2$ and the limited vertical aperture of the considered non-square UPA yields a sufficiently small vertical Rayleigh distance.
To facilitate subsequent analysis, we introduce a unified representation for both far-field and near-field steering vectors, denoted by $\tilde{\qb}(\theta, \vartheta_{\mathrm{h}}, \varphi_{\mathrm{h}}) = \qa_1(\theta) \otimes \tilde{\qb}_2(\vartheta_{\mathrm{h}}, \varphi_{\mathrm{h}})$ with
\begin{align}
      [\tilde{\qb}_2(\vartheta_{\mathrm{h}}, \varphi_{\mathrm{h}})]_i =& \frac{1}{\sqrt{M_2}} e^{j a_{\mathrm{h},i}\vartheta_{\mathrm{h}}}e^{-j a_{\mathrm{h},i}^2\varphi_{\mathrm{h}}},
\end{align}
where $\vartheta_{\mathrm{h}} \triangleq \frac{2\pi d \cos \theta \cos \phi}{\lambda}$ and $\varphi_{\mathrm{h}} \triangleq \frac{\pi d^2 (1-\cos^2 \theta \cos^2 \phi)}{\lambda r}$.
For the far-field paths, we have $\qa_2(\theta, \phi) = \tilde{\qb}_2(\vartheta_{\mathrm{h}}, 0)$, and thus $\qh_1$ in \eqref{eq:h1} can be rewritten as
\begin{align}
       \qh_1 = \sqrt{\frac{M}{K}} \sum_{k = 1}^{K} \beta_k \tilde{\qb}(\theta_k, \vartheta_{\mathrm{h},k}, \varphi_{\mathrm{h},k}).
\end{align}

\subsection{Problem Formulation}
In XL-MIMO systems, acquiring the full-dimensional CSI, i.e., $\qh_1$, is imperative to fully harness the high spatial resolution and extensive degrees of freedom.
For instance, such full-dimensional CSI serves as a fundamental prerequisite for designing the precoding matrix, selecting the optimal active antenna subset, and performing efficient resource allocation \cite{YChen-24TWC}, \cite{Sharifi-22TCOM}.
Therefore, in this paper, our objective is to estimate the full-dimensional channel vector $\qh_1$ from the undersampled observations $\qy_1$.
However, practical deployments of non-square UPAs severely limit the vertical physical array aperture.
For example, the vertical array aperture is only a quarter of the horizontal in \cite{3GPP-38.901}.
According to antenna theory, angular resolution is inversely proportional to the aperture size, and thus the elevation AoA resolution is four times coarser than the horizontal.
Consequently, directly decomposing the 2D channel estimation into two independent 1D procedures will inevitably incur severe performance loss due to the inadequate vertical angular accuracy.
On the other hand, executing a joint 2D parameter search over the entire UPA will introduce unaffordable computational complexity.
These challenges are further exacerbated by the hybrid-field propagation, which degrades both spatial and polar sparsity \cite{YLi-25TWC}.

Fortunately, this limited vertical aperture concurrently yields a minimal vertical Rayleigh distance, ensuring that the signal wavefronts along the elevation dimension remain predominantly planar.
Such planar wave characteristic endows the received signals with a harmonic structure (i.e., a superposition of complex sinusoids).
Motivated by this property, we first develop a linear prediction-based aperture extrapolation scheme in the antenna domain in the following, which exploits the spatial correlation among adjacent antennas to synthesize an enlarged vertical aperture, thereby alleviating the elevation resolution bottleneck.

\section{Antenna-Domain Aperture Extrapolation}\label{s:APEXante}
By leveraging the underlying harmonic structure of spatial array signals, we propose a linear prediction-based extrapolation scheme in the vertical antenna domain in this section.
This approach synthesizes unobserved spatial samples beyond the physical array boundaries, thereby enlarging the spatial observation window and enhancing the elevation AoA resolution.

To this end, we first partition the entire large UPA into multiple column-wise sub-arrays, where each vertical column constitutes an independent sub-array, and the received signal of the BS corresponding to the $m$-th sub-array is given by\footnote{We assume that the $m$-th vertical column and the central horizontal row of the UPA are fully activated. If these specific antennas are only partially activated, the signals of the unactivated elements can be recovered via the interpolation strategy detailed in Section\,\ref{s:APEXcorr}.}
\begin{align}
       \qy_{\mathrm{v},m} = \qh_{\mathrm{v},m} + \qn_{\mathrm{v},m},
\end{align}
where $\qy_{\mathrm{v},m} \in \mathbb{C}^{M_1 \times 1}$, $\qn_{\mathrm{v},m}$ denotes the received complex AWGN corresponding to the $m$-th sub-array, i.e., $\qn_{\mathrm{v},m} \sim \mathcal{C}\mathcal{N}(\mathbf{0}_{M_1 \times 1}, \sigma^2_\mathrm{n}\qI_{M_1})$, and $\qh_{\mathrm{v},m} \in \bbC^{M_1 \times 1}$ is the channel vector corresponding to the $m$-th sub-array, as follows
\begin{align}\label{eq:h_vm}
       \qh_{\mathrm{v},m} = \sqrt{\frac{M_1}{K}} \sum_{k = 1}^{K} \beta_k [\tilde{\qb}_2(\vartheta_{\mathrm{h},k}, \varphi_{\mathrm{h},k})]_m \qa_1(\theta_k).
\end{align}
Note that $\qh_{\mathrm{v},m}$ in \eqref{eq:h_vm} can be formulated as a linear combination of $K$ complex exponential signals with linear phases determined by the elevation AoAs.
Then, to virtually enlarge the vertical array aperture based on the extrapolation of the received signal $\qy_{\mathrm{v},m}$ in the antenna domain, we introduce the following lemma.

\begin{lemma}\label{lem:extrapolation}
       Any linear combination of $K$ complex exponential signals with linear phases necessarily satisfies a $K$-th order constant-coefficient linear homogeneous difference equation, i.e., there exists a unique set of coefficients $\{\bar{c}_1,\dots,\bar{c}_K\}$ such that $[\qh_{\mathrm{v},m}]_i + \bar{c}_1[\qh_{\mathrm{v},m}]_{i-1} + \dots + \bar{c}_K[\qh_{\mathrm{v},m}]_{i-K}=0$.
\end{lemma}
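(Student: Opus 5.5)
The approach is the classical Prony / characteristic-polynomial argument. First rewrite \eqref{eq:h_vm} entrywise. Set $z_k \triangleq e^{j\frac{2\pi d\sin\theta_k}{\lambda}}$ and absorb $\sqrt{M_1/K}\,\beta_k[\tilde{\qb}_2(\vartheta_{\mathrm{h},k},\varphi_{\mathrm{h},k})]_m/\sqrt{M_1}$ into a constant $\alpha_k$. Re-index the vertical elements by consecutive integers $i$. This gives
\begin{align}
[\qh_{\mathrm{v},m}]_i=\sum_{k=1}^{K}\alpha_k z_k^{\,i}.
\end{align}
Here $|z_k|=1$. We assume the $z_k$ are pairwise distinct, i.e., distinct $\sin\theta_k$; paths sharing the same elevation can be merged, which lowers the effective order.

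\emph{Existence.} Define the characteristic polynomial
\begin{align}
P(z)\triangleq\prod_{k=1}^{K}(z-z_k)=z^K+\bar{c}_1z^{K-1}+\dots+\bar{c}_K,
\end{align}
so the $\bar{c}_\ell$ are the (signed) elementary symmetric functions of $\{z_k\}$. Substituting the exponential sum into the left-hand side of the difference equation gives
\begin{align}
\sum_{k}\alpha_k z_k^{\,i-K}\big(z_k^K+\bar{c}_1z_k^{K-1}+\dots+\bar{c}_K\big)=\sum_k\alpha_k z_k^{\,i-K}P(z_k)=0.
\end{align}
This holds for every admissible $i$, since each $P(z_k)=0$. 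Note that the coefficients depend only on the elevation AoAs, not on $m$ or the gains. This is what makes the same predictor usable across all column sub-arrays.

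\emph{Uniqueness.} This is the main obstacle. It requires that all $\alpha_k\neq 0$, which holds almost surely since $\beta_k$ is continuous Gaussian. Suppose a second coefficient set $\{c'_\ell\}$ also annihilates the sequence. Subtracting the two recursions gives a polynomial $D(z)=\sum_{\ell=1}^{K}(\bar{c}_\ell-c'_\ell)z^{K-\ell}$ of degree at most $K-1$ satisfying
\begin{align}
\sum_k\alpha_k z_k^{\,i-K}D(z_k)=0
\end{align}
for $K$ consecutive values of $i$. This needs $M_1\ge 2K$ entries, which is the implicit requirement that the data length suffices. These $K$ equations form a Vandermonde system in the unknowns $\alpha_kD(z_k)$. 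The system is nonsingular because the $z_k$ are distinct. Hence $\alpha_kD(z_k)=0$, and therefore $D(z_k)=0$ for all $k$. A polynomial of degree at most $K-1$ with $K$ distinct roots vanishes identically, so the coefficients coincide.

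I would state these non-degeneracy conditions (distinct elevations, nonzero gains, $M_1\ge 2K$) explicitly as the hypotheses under which uniqueness holds. Existence needs none of them.
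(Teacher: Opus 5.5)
Your existence argument is the paper's. The paper writes $[\mathbf{h}_{\mathrm{v},m}]_i=\sum_k\alpha_{k,m}z_k^{\,i}$ and uses the annihilating polynomial $\prod_k(1-z_kz^{-1})$, whose coefficients are those of your monic $\prod_k(z-z_k)$. One small fix: the paper puts the re-indexing phase $e^{-j\pi(M_1+1)d\sin\theta_k/\lambda}$ into $\alpha_{k,m}$, and your $\alpha_k$ should carry it too.

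The real difference is uniqueness. The paper only says that, because the $z_k$ are distinct, the coefficients are uniquely determined. Strictly, that shows the polynomial with these prescribed roots has well-defined coefficients, which holds even without distinctness. It does not show that no other order-$K$ recursion annihilates the observed sequence.

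Your Vandermonde argument proves that stronger claim, and it exposes the hypotheses the lemma leaves implicit:
\begin{itemize}
\item Nonzero amplitudes. If some $\alpha_k=0$, the sequence satisfies a degree-$(K-1)$ recursion, and multiplying it by an arbitrary factor $(z-w)$ gives infinitely many order-$K$ annihilators.
\item Distinct $z_k$.
\item At least $M_1\ge 2K$ samples, so that $K$ independent recursion equations exist.
\end{itemize}

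The paper's route is shorter. It is enough for what the algorithm actually uses: an exact predictor exists, and its coefficients depend only on the elevation AoAs. Your version is what is needed if the recursion must be identifiable from finite data, which is the premise behind fitting prediction coefficients to $\mathbf{y}_{\mathrm{v},m}$.
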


\begin{proof}
See Appendix\,\ref{proof:lem:extrapolation}.
\end{proof}

Thus, each subsequent sample of the signal sequence in the antenna domain can be expressed as a linear combination of the previous $K$ samples, and the aperture extrapolation problem can be formulated as a linear prediction problem.
In the $z$ domain, this is equivalent to constructing an all-pole filter defined by the prediction coefficients with its transfer function given by $\bar{H}(z) = \frac{1}{1 - \sum_{k=1}^{K}(-\bar{c}_k)z^{-k}}$.
Based on the maximum entropy estimation theory \cite{burg1975maximum}, the power response of this filter, i.e., $|\bar{H}(z)|^2$, characterizes the autoregressive power spectral density of the observed signal, implying that the aperture extrapolation can be achieved by recursively applying these filter coefficients to predict subsequent spatial samples.
However, since $\qy_{\mathrm{v},m}$ contains additive noise, setting the prediction order strictly equal to the number of paths may lead to noise-induced pole perturbations.
Therefore, a higher prediction order is required to exploit additional degrees of freedom, as follows
\begin{align}\label{eq:filter}
       H(z) = \frac{1}{1 - \sum_{p=1}^{P}(-c_p)z^{-p}},\,P>K,
\end{align}
where $c_p$ is the linear prediction coefficient and will be determined later.
Let $y_{\mathrm{v},m,i} = [\qy_{\mathrm{v},m}]_i$ denote the $i$-th element of the original observation sequence, the forward extrapolated samples can be recursively generated by $y_{\mathrm{v},m,i} = \sum_{p=1}^{P} (-c_p) y_{\mathrm{v},m,i-p}$ for $i > M_1$.
Notice that given the recursive nature of such aperture extrapolation, an unstable filter with poles outside unit circle will cause noise-induced errors to amplify exponentially, leading to severe data divergence.
To circumvent this issue, the Burg algorithm \cite{burg1975maximum} is employed to compute the linear prediction coefficients, as it guarantees that all filter poles remain within the unit circle \cite{Swingler-89TSP}.

To further extend the aperture, the backward extrapolation is also performed subsequent to the forward process.
Note that the linear prediction coefficients of stationary random signals exhibit conjugate symmetry in the forward and backward directions, therefore, the backward extrapolation can be achieved by recursively applying the all-pole filter in \eqref{eq:filter} along the reverse array axis.
Specifically, the backward extrapolated samples are sequentially generated by $y_{\mathrm{v},m,i} = \sum_{p=1}^{P} (-c_p^*) y_{\mathrm{v},m,i+p}$ for $i < 1$ in descending order of the index $i$.
Finally, these newly generated samples are concatenated with the original observation sequence to construct the fully extrapolated signal vector.
We provide the detailed expression of the virtually received signal from the extrapolated vertical array in the the following theorem.

\begin{theorem}\label{the:extra_signal_antenna_domain}
       In the antenna domain, the extrapolated signal $\bar{\qy}_{\mathrm{v},m} \in \bbC^{\bar{M}_1 \times 1}$ can be formulated as
       \begin{align}\label{eq:yvm_bar}
              \bar{\qy}_{\mathrm{v},m} = \sum_{k = 1}^{K} \bar{\beta}_{k,m} \bar{\qa}_1(\theta_k) + \bar{\qn}_{\mathrm{v},m},
       \end{align}
       where $\bar{\beta}_{k,m} \triangleq \sqrt{\frac{\bar{M}_1}{K}} \beta_k [\tilde{\qb}_2(\vartheta_{\mathrm{h},k}, \varphi_{\mathrm{h},k})]_m$, $\bar{M}_1 \triangleq 2\lfloor \xi_1 M_1 \rfloor+M_1$, and $\xi_1$ denotes the extrapolation factor, i.e., $\lfloor \xi_1 M_1 \rfloor$ represents the length of the forward/backward extrapolated sequence.
       $\bar{\qn}_{\mathrm{v},m} \triangleq [\tilde{\qn}_{\mathrm{v},m}^T,\qn_{\mathrm{v},m}^T,\breve{\qn}_{\mathrm{v},m}^T]^T$ denotes the virtual received noise for the extrapolated array, $\breve{\qn}_{\mathrm{v},m} \sim \mathcal{C}\mathcal{N}(\mathbf{0}_{\lfloor \xi_1 M_1 \rfloor \times 1}, \breve{\qR}_{\mathrm{v},m})$ and $\tilde{\qn}_{\mathrm{v},m} \sim \mathcal{C}\mathcal{N}(\mathbf{0}_{\lfloor \xi_1 M_1 \rfloor \times 1}, \tilde{\qR}_{\mathrm{v},m})$ represent the noise induced by the forward and backward extrapolation processes with $\breve{\qR}_{\mathrm{v},m} $ defined in \eqref{eq:R_v_breve} and $\tilde{\qR}_{\mathrm{v},m} $ defined in \eqref{eq:R_v_tilde}, respectively.
       $\bar{\qa}_1(\theta_k) \in \bbC^{\bar{M}_1 \times 1}$ denotes the steering vector corresponding to the $k$-th path for the extrapolated array, which can be written by
       \begin{align}
              \bar{\qa}_1(\theta_k) =  \frac{1}{\sqrt{\bar{M}_1}} \bigg[ e^{j\frac{2\pi}{\lambda}(-\frac{\bar{M}_1-1}{2})d\sin \theta_k}, \dots,e^{j\frac{2\pi}{\lambda}\frac{\bar{M}_1-1}{2}d\sin \theta_k}\bigg]^T\!.
       \end{align}
\end{theorem}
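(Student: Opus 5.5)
We split the extrapolated output into a signal part and a noise part, and handle each with Lemma~\ref{lem:extrapolation} plus linearity of the recursion. Write $\qy_{\mathrm{v},m}=\qh_{\mathrm{v},m}+\qn_{\mathrm{v},m}$. The forward rule $y_{\mathrm{v},m,i}=\sum_{p=1}^{P}(-c_p)y_{\mathrm{v},m,i-p}$ and the backward rule $y_{\mathrm{v},m,i}=\sum_{p=1}^{P}(-c_p^*)y_{\mathrm{v},m,i+p}$ are linear in the initial window. Hence every extrapolated sample equals (the same recursion applied to $\qh_{\mathrm{v},m}$) plus (the recursion applied to $\qn_{\mathrm{v},m}$), up to the perturbation of the coefficients themselves.

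\textbf{Signal part.} By Lemma~\ref{lem:extrapolation}, $\qh_{\mathrm{v},m}$ obeys a $K$-th order recursion whose characteristic roots are $z_k=e^{j2\pi d\sin\theta_k/\lambda}$. A Burg filter of order $P>K$ fitted to the noiseless data contains these $K$ roots among its zeros of $1+\sum_p c_p z^{-p}$. Therefore the polynomial $1+\sum_p c_p z^{-p}$ is divisible by $\prod_k(1-z_kz^{-1})$, and it annihilates every sequence $z_k^{i}$. Running the recursion forward then simply continues each $z_k^{i}$ to indices $i=M_1+1,\dots,M_1+\lfloor\xi_1M_1\rfloor$. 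For the backward direction we use conjugate symmetry: for unimodular $z_k$, $z_k^{-1}=z_k^*$, so the reversed recursion with $c_p^*$ annihilates $z_k^{i}$ as well. Hence the backward extrapolation continues each harmonic to the negative indices. Re-centering the index set to $\{-\frac{\bar M_1-1}{2},\dots,\frac{\bar M_1-1}{2}\}$ and moving the normalization $1/\sqrt{M_1}$ to $1/\sqrt{\bar M_1}$ (absorbed into $\bar\beta_{k,m}$) gives $\sum_k\bar\beta_{k,m}\bar{\qa}_1(\theta_k)$. There is also a harmless common phase from re-centering, which we absorb into $\bar\beta_{k,m}$ or else avoid because the lattice is symmetric.

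\textbf{Noise part.} The middle block is just $\qn_{\mathrm{v},m}$. For the forward block, unroll the recursion to get $\breve{\qn}_{\mathrm{v},m}=\qC_{\mathrm{f}}\,\qn_{\mathrm{v},m}$. Here $\qC_{\mathrm{f}}$ stacks the impulse-response combinations of the all-pole filter $H(z)$ acting on the last $P$ observed noise samples; its rows are obtained by repeatedly applying the companion matrix of $\{c_p\}$. This gives $\breve{\qn}_{\mathrm{v},m}\sim\mathcal{CN}(\mathbf 0,\sigma_\mathrm{n}^2\qC_{\mathrm{f}}\qC_{\mathrm{f}}^H)$, which defines $\breve{\qR}_{\mathrm{v},m}$. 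The backward block works the same way with the conjugated coefficients, but it acts on the first $P$ samples, giving $\tilde{\qR}_{\mathrm{v},m}=\sigma_\mathrm{n}^2\qC_{\mathrm{b}}\qC_{\mathrm{b}}^H$. Gaussianity is preserved because the map is linear. The Burg stability guarantee (all poles inside the unit circle) keeps these covariances bounded as $\xi_1$ grows. The bounded but growing diagonal of these covariances is what produces the accumulation effect analysed later.

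\textbf{Main obstacle.} The Burg coefficients are themselves computed from noisy data, so strictly the map is not linear and the noiseless harmonics are continued only approximately. I would handle this with a first-order (high-SNR) perturbation argument. Treat $c_p$ as the noiseless-consistent coefficients, and lump the coefficient-error contribution, which is linear in the noise to first order, into $\breve{\qn}_{\mathrm{v},m}$ and $\tilde{\qn}_{\mathrm{v},m}$. A related subtlety is that Burg places poles strictly inside the unit circle, while the harmonics sit on it. I would argue that as the SNR increases the $K$ signal poles approach $z_k$, so the approximation holds in that regime. The statement is then the standard model interpretation: signal component plus zero-mean Gaussian extrapolation noise with the stated covariances.
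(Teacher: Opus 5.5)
\textbf{Gap: your noise model is missing the innovation term, so you do not obtain the covariances the theorem asserts.}

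Your signal-part argument is at the level of rigor the paper itself uses; the paper just says the signal follows by concatenation. Your conjugate-symmetry check for the backward recursion is correct.

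The problem is the noise part. You obtain $\breve{\qR}_{\mathrm{v},m}=\sigma_\mathrm{n}^2\qC_{\mathrm f}\qC_{\mathrm f}^H$, where the only nonzero columns of $\qC_{\mathrm f}$ are $-\breve{\qC}_1^{-1}\breve{\qD}_1$ acting on the last $P$ observed noise samples. This is only the second of the two terms in \eqref{eq:R_v_breve}, which is $\breve{\qC}_1^{-1}(\breve{\sigma}^2_{\eta_1}\qI+\sigma^2_\mathrm{n}\breve{\qD}_1\breve{\qD}_1^H)(\breve{\qC}_1^{-1})^H$. You are missing $\breve{\sigma}^2_{\eta_1}\breve{\qC}_1^{-1}(\breve{\qC}_1^{-1})^H$, and likewise $\tilde{\sigma}^2_{\eta_1}(\tilde{\qC}_1^{*})^{-1}\{(\tilde{\qC}_1^{*})^{-1}\}^H$ in the backward block.

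That term comes from an idea your proof lacks. At every prediction step the paper injects a fresh innovation: $[\breve{\qn}_{\mathrm{v},m}]_i=\sum_{p}(-c_p)\breve{w}_{M_1+i-p}+[\breve{\boldsymbol{\eta}}_1]_i$. Via the Wold decomposition, $\breve{\boldsymbol{\eta}}_1$ is modelled as white circular Gaussian, independent of $\qn_{\mathrm{v},m}$, with variance equal to the prediction-error power returned by Burg. Writing this as $\breve{\qC}_1\breve{\qn}_{\mathrm{v},m}=\breve{\boldsymbol{\eta}}_1-\breve{\qD}_1\breve{\qn}_{\mathrm{r},m}$ and using independence to remove the cross terms gives exactly \eqref{eq:R_v_breve}.

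Neither of your two devices supplies this term. Your ``noiseless-consistent coefficients'' simply set the innovation to zero. Your fallback of lumping the coefficient error into the noise never assigns it a covariance. Moreover, an error that is linear, to first order, in the same noise the coefficients were fitted to is correlated with $\qn_{\mathrm{v},m}$. It would therefore produce cross terms and would not collapse to the stated block form.

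The omission also contradicts your remark that the diagonal of the covariance grows. In your model, $\breve{\qn}_{\mathrm{v},m}$ is driven only by the initial $P$ samples through a stable homogeneous recursion. Each entry is then a combination of $z_\ell^{\,i}$ with $|z_\ell|<1$, i.e., the free response of a stable filter, so its variance eventually decays and nothing accumulates. The bounded, monotone growth $\breve{\sigma}^2_{\eta_1}\sum_{l=0}^{i-1}|h_l|^2$, with $h_l$ the impulse response of $H(z)$, arises only when the innovation is fed through $\breve{\qC}_1^{-1}$. This is the mechanism the subsequent remark and CRB analysis rely on.

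To repair the proof, replace your pure-propagation model with this innovation-driven recursion. Use its conjugate-reversed counterpart, driven by $\tilde{\boldsymbol{\eta}}_1$ and acting on the first $P$ samples, for the backward block. Then read off both covariances from the resulting matrix forms.
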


\begin{proof}
See Appendix\,\ref{proof:the:extra_signal_antenna_domain}.
\end{proof}

Based on the analytical model for the antenna-domain extrapolated signals established in {\it Theorem\,\ref{the:extra_signal_antenna_domain}}, we derive the CRB in {\it Theorem\,\ref{the:CRBante}} to evaluate the theoretical performance limit for the elevation AoA estimation.

\begin{theorem}\label{the:CRBante}
       Based on the extrapolated signal $\bar{\qy}_{\mathrm{v},m}$ in \eqref{eq:yvm_bar}, the Fisher information matrix (FIM) for the elevation AoA vector $\boldsymbol{\theta} \triangleq [\theta_1,\dots,\theta_K]^T$, with the complex path gain vector $\boldsymbol{\beta} \triangleq [\beta_1,\dots,\beta_K]^T$ treated as the nuisance parameter, is given by
       \begin{align}
              \mathbf{F}_{\mathrm{eq,ante}} = 2 \Re \left\{ \boldsymbol{\Lambda}_{\beta}^H \dot{\mathbf{A}}_{\theta}^H \mathbf{R}_{\bar{n},\mathrm{v},m}^{-1} \mathbf{P}_{\mathrm{A}} \dot{\mathbf{A}}_{\theta} \boldsymbol{\Lambda}_{\beta} \right\},
       \end{align}
       where $\boldsymbol{\Lambda}_{\beta} \triangleq \mathrm{diag}(\boldsymbol{\beta})$, $\dot{\mathbf{A}}_{\theta} \triangleq \frac{\partial \bar{\mathbf{A}}}{\partial \boldsymbol{\theta}^T}$ with $\bar{\mathbf{A}}$ denoting the extrapolated array manifold matrix whose $k$-th column is $\bar{\mathbf{a}}_1(\theta_k)$, $\mathbf{P}_{\mathrm{A}} = \mathbf{I}_{\bar{M}_1} - \bar{\mathbf{A}}(\bar{\mathbf{A}}^H \mathbf{R}_{\bar{n},\mathrm{v},m}^{-1} \bar{\mathbf{A}})^{-1} \bar{\mathbf{A}}^H \mathbf{R}_{\bar{n},\mathrm{v},m}^{-1}$,
       and $\mathbf{R}_{\bar{n},\mathrm{v},m}$ is the correlation matrix for $\bar{\qn}_{\mathrm{v},m}$, defined as \eqref{eq:R_n_ante}.
       Thus, the CRB for $\theta_k$ is
       \begin{align}
             \mathrm{CRB}_{\mathrm{ante}}(\theta_k) = [\mathbf{F}_{\mathrm{eq,ante}}^{-1}]_{k,k}. 
       \end{align}
\end{theorem}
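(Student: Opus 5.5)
The plan is to treat $\bar{\qy}_{\mathrm{v},m}$ in \eqref{eq:yvm_bar}, as given by Theorem\,\ref{the:extra_signal_antenna_domain}, as a deterministic-parameter Gaussian model. We write $\bar{\qy}_{\mathrm{v},m}=\bar{\qA}\boldsymbol{\beta}'+\bar{\qn}_{\mathrm{v},m}$, where $\bar{\qn}_{\mathrm{v},m}\sim\mathcal{CN}(\mathbf{0},\mathbf{R}_{\bar{n},\mathrm{v},m})$. The covariance is block-structured from $\tilde{\qR}_{\mathrm{v},m}$, $\sigma_\mathrm{n}^2\qI_{M_1}$ and $\breve{\qR}_{\mathrm{v},m}$, together with the cross terms in \eqref{eq:R_n_ante}, and it is treated as known and independent of $\boldsymbol{\theta}$. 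The effective gains $\bar{\beta}_{k,m}$ are absorbed into $\boldsymbol{\beta}$; the scaling constant $\sqrt{\bar{M}_1/K}[\tilde{\qb}_2]_m$ is simply a known or nuisance factor, so we keep the notation $\boldsymbol{\Lambda}_\beta$. The unknown real parameter vector is then $\boldsymbol{\eta}=[\boldsymbol{\theta}^T,\Re\{\boldsymbol{\beta}\}^T,\Im\{\boldsymbol{\beta}\}^T]^T$, and the mean is $\boldsymbol{\mu}(\boldsymbol{\eta})=\bar{\qA}(\boldsymbol{\theta})\boldsymbol{\beta}$.

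The first step applies the Slepian--Bangs formula for a complex Gaussian vector with parameter-independent covariance: $[\mathbf{F}]_{i,j}=2\Re\{\frac{\partial\boldsymbol{\mu}^H}{\partial\eta_i}\mathbf{R}_{\bar{n},\mathrm{v},m}^{-1}\frac{\partial\boldsymbol{\mu}}{\partial\eta_j}\}$. The derivatives are $\partial\boldsymbol{\mu}/\partial\theta_k=\beta_k\,\partial\bar{\qa}_1(\theta_k)/\partial\theta_k$, which collected over $k$ gives $\dot{\qA}_\theta\boldsymbol{\Lambda}_\beta$ (with $\dot{\qA}_\theta$ read as the matrix of column derivatives). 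Also $\partial\boldsymbol{\mu}/\partial\Re\{\boldsymbol{\beta}\}^T=\bar{\qA}$ and $\partial\boldsymbol{\mu}/\partial\Im\{\boldsymbol{\beta}\}^T=j\bar{\qA}$. Defining $\qD=\dot{\qA}_\theta\boldsymbol{\Lambda}_\beta$ and $\qG=[\bar{\qA},\,j\bar{\qA}]$, the full FIM takes the block form
\begin{align}
\mathbf{F}=2\Re\left\{\begin{bmatrix}\qD^H\\ \qG^H\end{bmatrix}\mathbf{R}_{\bar{n},\mathrm{v},m}^{-1}\begin{bmatrix}\qD & \qG\end{bmatrix}\right\}.
\end{align}

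The second step eliminates the nuisance gains through the Schur complement, $\mathbf{F}_{\mathrm{eq}}=\mathbf{F}_{\theta\theta}-\mathbf{F}_{\theta\beta}\mathbf{F}_{\beta\beta}^{-1}\mathbf{F}_{\beta\theta}$. Working directly with the real and imaginary parts is the main obstacle, and we circumvent it with the standard whitening argument. Let $\mathbf{W}=\mathbf{R}_{\bar{n},\mathrm{v},m}^{-1/2}$, and observe that the real span of the columns of $\mathbf{W}\qG$ equals the complex column space of $\mathbf{W}\bar{\qA}$. The real Schur complement therefore equals $2\Re\{\qD^H\mathbf{W}^H\mathbf{P}^{\perp}_{\mathbf{W}\bar{\qA}}\mathbf{W}\qD\}$, where $\mathbf{P}^{\perp}_{\mathbf{W}\bar{\qA}}=\qI-\mathbf{W}\bar{\qA}(\bar{\qA}^H\mathbf{R}_{\bar{n},\mathrm{v},m}^{-1}\bar{\qA})^{-1}\bar{\qA}^H\mathbf{W}^H$. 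This is the classical Stoica--Nehorai step: the minimization over complex $\boldsymbol{\beta}$ of a quadratic form in $\qD\mathbf{x}+\bar{\qA}\boldsymbol{\beta}$ reduces to a projection. To verify it we will write the real quadratic form $\mathbf{x}^T\mathbf{F}_{\mathrm{eq}}\mathbf{x}=\min_{\mathbf{b}}2\|\mathbf{W}(\qD\mathbf{x}+\bar{\qA}\mathbf{b})\|^2$ with $\mathbf{x}$ real and $\mathbf{b}$ complex, and solve this weighted least-squares problem in closed form. Expanding $\mathbf{W}^H\mathbf{P}^{\perp}_{\mathbf{W}\bar{\qA}}\mathbf{W}=\mathbf{R}_{\bar{n},\mathrm{v},m}^{-1}\mathbf{P}_{\mathrm{A}}$ with $\mathbf{P}_{\mathrm{A}}$ as defined in the statement then yields $\mathbf{F}_{\mathrm{eq,ante}}=2\Re\{\boldsymbol{\Lambda}_\beta^H\dot{\qA}_\theta^H\mathbf{R}_{\bar{n},\mathrm{v},m}^{-1}\mathbf{P}_{\mathrm{A}}\dot{\qA}_\theta\boldsymbol{\Lambda}_\beta\}$.

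Finally, we need $\bar{\qA}$ to have full column rank, which holds for distinct $\theta_k$ and $\bar{M}_1>K$ by the Vandermonde structure, and $\mathbf{F}_{\mathrm{eq,ante}}$ to be nonsingular. Under these conditions the CRB follows from the inverse of the equivalent FIM, giving $\mathrm{CRB}_{\mathrm{ante}}(\theta_k)=[\mathbf{F}_{\mathrm{eq,ante}}^{-1}]_{k,k}$. One caveat to flag is that the extrapolated noise is correlated with the observed noise and is generated by the data. The derivation therefore relies on the Gaussian model supplied by Theorem\,\ref{the:extra_signal_antenna_domain}, with $\mathbf{R}_{\bar{n},\mathrm{v},m}$ assumed free of $\boldsymbol{\theta}$; otherwise an extra trace term would appear in the Slepian--Bangs formula.
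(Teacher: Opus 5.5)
Your proposal is correct, and its skeleton matches the paper's proof in Appendix C. Both use the deterministic Gaussian model $\boldsymbol{\mu}_1=\bar{\mathbf{A}}\boldsymbol{\beta}$ with a $\boldsymbol{\theta}$-independent covariance $\mathbf{R}_{\bar{n},\mathrm{v},m}$, the real parameter vector $[\boldsymbol{\theta}^T,\Re\{\boldsymbol{\beta}\}^T,\Im\{\boldsymbol{\beta}\}^T]^T$, the Slepian--Bangs formula, and a Schur complement to remove the gains.

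The difference lies in how the Schur complement is evaluated. The paper writes the blocks explicitly as $\mathbf{F}_{\beta}=2\,\mathcal{I}(\boldsymbol{\Omega})$ and $\mathbf{F}_{\theta,\beta}=2[\Re(\boldsymbol{\Upsilon}^H),-\Im(\boldsymbol{\Upsilon}^H)]$. It then uses the real-composite representation, $\mathbf{F}_\beta^{-1}=\frac12\mathcal{I}(\boldsymbol{\Omega}^{-1})$, to collapse the correction term to $2\Re\{\boldsymbol{\Upsilon}^H\boldsymbol{\Omega}^{-1}\boldsymbol{\Upsilon}\}$, and factors out $\boldsymbol{\Lambda}_\beta^H\dot{\mathbf{A}}_\theta^H\mathbf{R}_{\bar{n},\mathrm{v},m}^{-1}$ to reach the $\mathbf{P}_{\mathrm{A}}$ form. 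You instead use the variational characterization $\mathbf{x}^T\mathbf{F}_{\mathrm{eq}}\mathbf{x}=\min_{\mathbf{b}\in\mathbb{C}^K}2\|\mathbf{W}(\mathbf{D}\mathbf{x}+\bar{\mathbf{A}}\mathbf{b})\|^2$. Combined with the fact that the real span of $[\bar{\mathbf{A}},j\bar{\mathbf{A}}]$ is the complex span of $\bar{\mathbf{A}}$, this turns nuisance elimination into a whitened orthogonal projection.

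The paper's route is mechanical block algebra and needs no matrix square root. Yours is more conceptual, avoids the real/imaginary bookkeeping, and makes $\mathbf{F}_{\mathrm{eq,ante}}\succeq\mathbf{0}$ immediate, since $\mathbf{R}_{\bar{n},\mathrm{v},m}^{-1}\mathbf{P}_{\mathrm{A}}=\mathbf{W}^H\mathbf{P}^{\perp}_{\mathbf{W}\bar{\mathbf{A}}}\mathbf{W}$.

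Two steps you use implicitly should be stated:
\begin{itemize}
\item The min-characterization of the Schur complement needs $\mathbf{F}_{\beta\beta}\succ\mathbf{0}$. This follows from the full column rank of $\bar{\mathbf{A}}$, which you already invoke.
\item Going from equal quadratic forms to equal matrices uses that both sides are real symmetric. This holds because $\mathbf{x}^T\mathbf{M}\mathbf{x}=\mathbf{x}^T\Re\{\mathbf{M}\}\mathbf{x}$ for Hermitian $\mathbf{M}$ and real $\mathbf{x}$, and $\Re\{\mathbf{M}\}$ is symmetric.
\end{itemize}
Your caveats are assumptions the paper also relies on but leaves unstated. They are: nonsingularity of $\mathbf{F}_{\mathrm{eq,ante}}$, $\boldsymbol{\theta}$-independence of the data-driven covariance, and absorbing the factor $\sqrt{\bar M_1/K}[\tilde{\mathbf{b}}_2]_m$ into the gains.
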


\begin{proof}
See Appendix\,\ref{proof:the:CRBante}.
\end{proof}

\begin{remark}
       The CRB derived in {\it Theorem\,\ref{the:CRBante}} reveals that the elevation AoA estimation accuracy exhibits a non-monotonic evolution with the extrapolated aperture $\bar{M}_1$.
\end{remark}

This non-monotonic phenomenon is governed by the trade-off between the spatial resolution gain and the recursive prediction error accumulation.
On one hand, enlarging $\bar{M}_1$ amplifies the spatial indices of the edge antennas, which increases the norm of $\dot{\mathbf{A}}_{\theta}$ and elevates the Fisher information.
Concurrently, the extended dimensionality of the manifold matrix $\bar{\mathbf{A}}$ drives the steering vectors of distinct paths toward spatial orthogonality, mitigating the multipath interference coupling.
On the other hand, due to the recursive nature of linear prediction, the single-step prediction error accumulates and amplifies rapidly within the correlation matrix $\mathbf{R}_{\bar{n},\mathrm{v},m}$.
Once the extrapolation steps exceed a threshold, the exponential error penalty overwhelms the resolution gain, thereby dictating an upper bound on the aperture expansion.
As verified in Fig.\,\ref{fig:NMSE-CRBaoa_SNR}(a), this trade-off restricts the maximum achievable extrapolated aperture of the antenna-domain scheme to approximately $2.5$ to $4$ times the original physical size.

\section{Low-Complexity Channel Estimation}\label{s:channel_est}
In this section, we propose a low-complexity channel estimation algorithm for non-square UPA-assisted hybrid-field XL-MIMO systems.
First, the elevation AoAs are estimated from the extrapolated arrays via the proposed Ex-DFT-NOMP algorithm.
Subsequently, a hybrid-field DFrFT-NOMP algorithm is provided to acquire the azimuth AoAs, ranges, and the corresponding complex gains from the central horizontal row of the UPA.
Finally, a subspace fitting-driven path matching algorithm is conducted to associate the decoupled multipath parameters by aligning them with the low-dimensional channel estimate.

\subsection{Estimation for the Elevation AoAs}
By searching for the peak with maximum amplitude from the DFT spectrum of the residual vector for $\bar{\qy}_{\mathrm{v},m}$, which is initialized as $\bar{\qy}_{\mathrm{v},m,\mathrm{r}} \triangleq [\bar{\qy}_{\mathrm{v},m}^T, \mathbf{0}_{\bar{M}_1(\nu_1-1) \times 1}^T]^T$, we obtain the coarse estimates for the elevation AoA, i.e., $\hat{\theta}_{\max}$, and the corresponding complex gain, i.e., $\hat{\beta}_{1,\max}$, as follows
\begin{subequations}
\begin{align}
       \hat{\theta}_{\max} =& \arcsin \left(\frac{\bar{i} \lambda}{\bar{M}_1\nu_1d} \right), \label{eq:coarseEst_eAoA}\\
       \hat{\beta}_{1,\max} =& \frac{\{\bar{\qa}_1(\hat{\theta}_{\max})\}^H\tilde{\qy}_{\mathrm{v},m,\mathrm{r}}}{\|\bar{\qa}_1(\hat{\theta}_{\max})\|^2},\label{eq:coarseEst_beta}
\end{align}
\end{subequations}
with $\bar{i} = \arg \mathop{\max}_{i} \{ |[\qg_{\mathrm{v},m}]_i|\, \big| \qg_{\mathrm{v},m} \triangleq \mathcal{F}^{(1)}\{\bar{\qy}_{\mathrm{v},m,\mathrm{r}}\} \}$,
where $\nu_1$ denotes the oversampling factor, $\tilde{\qy}_{\mathrm{v},m,\mathrm{r}}$ is of the first $\bar{M}_1$ elements in $\bar{\qy}_{\mathrm{v},m,\mathrm{r}}$.
In addition, the residual vector should be updated as
\begin{align}\label{eq:yvm_r_dot}
       \dot{\qy}_{\mathrm{v},m,\mathrm{r}} = \bar{\qy}_{\mathrm{v},m,\mathrm{r}} - \hat{\beta}_{1,\max} \dot{\qa}_1(\hat{\theta}_{\max}),
\end{align}
where $\dot{\qa}_1(\hat{\theta}_{\max}) \triangleq [\{\bar{\qa}_1(\hat{\theta}_{\max})\}^T, \mathbf{0}_{\bar{M}_1(\nu_1-1) \times 1}^T]^T$.

To alleviate the off-the-grid effect from the aforementioned peak detection procedure, we propose to apply the Newton-based refinements for $\hat{\theta}_{\max}$ and $\hat{\beta}_{1,\max}$.
Specifically, $R_s$ times of local refinements are executed with the objective of minimizing the residual energy $E_1(\theta) = \|\dot{\qy}_{\mathrm{v},m,\mathrm{r}} - \hat{\beta}_{1,\max} \dot{\qa}_1(\theta)\|^2$.
For the $(t+1)$-th iteration, the updates are given by
\begin{subequations}\label{eq:local_refine}
\begin{align}
       \hat{\beta}_{1,\max}^{(t+1)} =& \frac{\{\dot{\qa}_1(\hat{\theta}_{\max}^{(t)})\}^H }{\|\dot{\qa}_1(\hat{\theta}_{\max}^{(t)})\|^2} \left\{ \dot{\qy}_{\mathrm{v},m,\mathrm{r}}^{(t)} + \hat{\beta}_{1,\max}^{(t)} \dot{\qa}_1(\hat{\theta}_{\max}^{(t)}) \right\},\\
       \dot{\qy}_{\mathrm{v},m,\mathrm{r}}^{(t+1)} =& \dot{\qy}_{\mathrm{v},m,\mathrm{r}}^{(t)} - (\hat{\beta}_{1,\max}^{(t+1)} - \hat{\beta}_{1,\max}^{(t)}) \dot{\qa}_1(\hat{\theta}_{\max}^{(t)}),\\
       \hat{\theta}_{\max}^{(t+1)} =& \hat{\theta}_{\max}^{(t)} - \frac{E'_1(\hat{\theta}_{\max}^{(t)})}{E''_1(\hat{\theta}_{\max}^{(t)})},
\end{align}
\end{subequations}
where $E'_1(\theta)$ and $E''_1(\theta)$ denote the first-order and second-order derivatives of $E_1(\theta)$, respectively.
Note that the updates of $\hat{\beta}_{1,\max}^{(t+1)}$ and $\hat{\theta}_{\max}^{(t+1)}$ are adopted only when $E''_1(\hat{\theta}_{\max}^{(t)}) > 0$ and $\|\dot{\qy}_{\mathrm{v},m,\mathrm{r}}^{(t+1)}\|^2 \leq \|\dot{\qy}_{\mathrm{v},m,\mathrm{r}}^{(t)}\|^2$, which ensures the effective reduction of residual energy and guarantees the convergence of the refinement procedure.

After local refinement, we update $\mathcal{P}_1 = \{\mathcal{P}_1, \hat{\theta}_{\max}^{(t+1)}\}$ and $\mathcal{Q}_1 = \{\mathcal{Q}_1, \hat{\beta}_{1,\max}^{(t+1)}\}$, and implement global refinements.
In this step, each $\hat{\theta}_1 \in \mathcal{P}_1$ and $\hat{\beta}_1 \in \mathcal{Q}_1$ are iteratively refined by performing local refinement for $R_G$ times.
To further reduce the energy of the residual vector $\dot{\qy}_{\mathrm{v},m,\mathrm{r}}$, $\hat{\boldsymbol{\beta}}_1 \triangleq [\hat{\beta}_{1,1},\dots,\hat{\beta}_{1,|\mathcal{Q}_1|}]^T$ is updated based on the LS method, as follows
\begin{align}\label{eq:beta_hat}
       \hat{\boldsymbol{\beta}}_1 = (\qA_1^H\qA_1 + \sigma_0^2\qI_{|\mathcal{P}_1|})^{-1}\qA_1^H \bar{\qy}_{\mathrm{v},m},
\end{align}
where $\qA_1 \triangleq [\bar{\qa}_1(\hat{\theta}_{1,1}), \dots, \bar{\qa}_1(\hat{\theta}_{1,|\mathcal{P}_1|})]$, $\sigma_0^2\qI$ is the regularization term that ensures the full rank of $\qA_1^H\qA_1 + \sigma_0^2\qI_{|\mathcal{P}_1|}$, and we set $\sigma_0^2 = 10^{-5}$ in the simulation.
Then, we update $\dot{\qy}_{\mathrm{v},m,\mathrm{r}} = \bar{\qy}_{\mathrm{v},m} - \qA_1 \hat{\boldsymbol{\beta}}_1$.

The algorithm terminates when the residual energy falls below the expected noise energy, i.e., $\|\dot{\qy}_{\mathrm{v},m,\mathrm{r}}\|^2 < \tau_1$, where $\tau_1$ denotes the threshold, which is given by
\begin{align}
       \tau_1 =& \mathbb{E} \left\{\|\bar{\qn}_{\mathrm{v},m}\|^2 \right\} \notag\\
       =& \mathrm{tr}(\breve{\qR}_{\mathrm{v},m}) + \mathrm{tr}(\tilde{\qR}_{\mathrm{v},m}) + \sigma^2_\mathrm{n} M_1.
\end{align}
Consequently, the proposed Ex-DFT-NOMP algorithm is summarized in \textbf{Algorithm$\,$\ref{alg:DFT_NOMP}}.

\begin{algorithm}[!ht]
       \caption{Ex-DFT-NOMP Algorithm}
       \begin{algorithmic} [1]\label{alg:DFT_NOMP}
       \STATE \textbf{Initialize:} $\bar{\qy}_{\mathrm{v},m,\mathrm{r}} \triangleq [\bar{\qy}_{\mathrm{v},m}^T, \mathbf{0}_{\bar{M}_1(\nu_1-1) \times 1}^T]^T$, and $\dot{\qy}_{\mathrm{v},m,\mathrm{r}} = \bar{\qy}_{\mathrm{v},m,\mathrm{r}}$.
       \WHILE{$\|\dot{\qy}_{\mathrm{v},m,\mathrm{r}}\|^2 \geq \tau_1$}
       \STATE Obtain $\hat{\theta}_{\max}$ and $\hat{\beta}_{1,\max}$ based on \eqref{eq:coarseEst_eAoA} and \eqref{eq:coarseEst_beta}, and update $\dot{\qy}_{\mathrm{v},m,\mathrm{r}}$ via \eqref{eq:yvm_r_dot}.
       \STATE Refine $\hat{\theta}_{\max}$ and $\hat{\beta}_{1,\max}$ by $R_\mathrm{L}$ times via \eqref{eq:local_refine}.
       \STATE Update $\mathcal{P}_1 = \{\mathcal{P}_1, \hat{\theta}_{\max}\}$ and $\mathcal{Q}_1 = \{\mathcal{Q}_1, \hat{\beta}_{1,\max}\}$.
       \FOR{$i = 1, \dots, R_\mathrm{G}$}
       \FOR{each $\hat{\theta}_1$ in $\mathcal{P}_1$ and $\hat{\beta}_1$ in $\mathcal{Q}_1$}
       \STATE Refine $\hat{\theta}_1$ and $\hat{\beta}_1$ by $R_\mathrm{L}$ times via \eqref{eq:local_refine}.
       \ENDFOR
       \ENDFOR
       \STATE Update $\hat{\boldsymbol{\beta}}_1$ via \eqref{eq:beta_hat} and $\dot{\qy}_{\mathrm{v},m,\mathrm{r}} = \bar{\qy}_{\mathrm{v},m} - \qA_1 \hat{\boldsymbol{\beta}}_1$.
       \ENDWHILE
\end{algorithmic}
\end{algorithm}

\subsection{Estimation for the Azimuth AoAs and the Ranges}\label{subs:Azi_Est}
For notational convenience, the received signal at the middle row of the BS antennas is extracted as
\begin{align}
       \qy_3 = [\qy_1]_{(\frac{M_1-1}{2}M_2+1):\frac{M_1+1}{2}M_2},
\end{align}
which is equivalent to a signal vector received by a horizontal ULA, i.e., $\qy_3 = \qh_3 + \qn_3$, where $\qn_3$ denotes the received complex AWGN with $\qn_3 \sim \mathcal{C}\mathcal{N}(\mathbf{0}_{M_2 \times 1}, \sigma^2_\mathrm{n}\qI_{M_2})$, and $\qh_3 = \sqrt{\frac{M_2}{K}} \sum_{k = 1}^{K} \beta_k \tilde{\qb}_2(\vartheta_{\mathrm{h},k}, \varphi_{\mathrm{h},k})$ is the channel vector.

Due to the sparsity of the hybrid-field channel in the fractional Fourier domain, different propagation paths manifest as distinct peaks in the DFrFT spectrum.
Consequently, the estimation of the azimuth AoAs and ranges can be formulated as a sparse peak search problem over the DFrFT spectrum \cite{XYang-24WCL}.
By finding the peak with the maximum amplitude in the DFrFT spectrum of the residual vector associated with $\qy_3$, which is initialized as $\qy_{3,\mathrm{r}} \triangleq [\qy_3^T, \mathbf{0}_{M_2(\nu_2-1) \times 1}^T]^T$, the coarse estimates of $\vartheta_{\mathrm{h}}$ and $\varphi_{\mathrm{h}}$ can be obtained as
\begin{align}\label{eq:coarseEst}
       \hat{\vartheta}_{\mathrm{h},\max} = \frac{2\pi \bar{j}\csc(\frac{\pi}{2}p_{\bar{i}})}{M_2\nu_2},\,
       \hat{\varphi}_{\mathrm{h},\max} = \frac{\pi \cot(\frac{\pi}{2}p_{\bar{i}})}{M_2\nu_2},
\end{align}
with $(\bar{i},\bar{j}) = \arg \mathop{\max}_{i,j} \{ |[\qG]_{i,j}|\,| \qG \triangleq [ \mathcal{F}^{(q_1)}\{\qy_{3,\mathrm{r}}\},\mathcal{F}^{(q_2)}$ $\{\qy_{3,\mathrm{r}}\}, \dots,\mathcal{F}^{(q_Q)}\{\qy_{3,\mathrm{r}}\}] \}$,
where $\nu_2$ represents the oversampling factor, $q_i = \frac{1}{2} + \frac{i-1}{Q-1}, \forall i \in \{1,2,\dots,Q\}$, and $Q$ is the number of $q_i$.
Accordingly, the estimate of the complex gain is given by
\begin{align}\label{eq:coarseEst_beta3}
       \hat{\beta}_{3,\max} = \frac{\{\tilde{\qb}_2(\hat{\vartheta}_{\mathrm{h},\max}, \hat{\varphi}_{\mathrm{h},\max})\}^H\tilde{\qy}_{3,\mathrm{r}}}{\|\tilde{\qb}_2(\hat{\vartheta}_{\mathrm{h},\max}, \hat{\varphi}_{\mathrm{h},\max})\|^2},
\end{align}
where $\tilde{\qy}_{3,\mathrm{r}}$ is of the first $M_2$ elements in $\qy_{3,\mathrm{r}}$.
Then, the residual vector is updated as
\begin{align}\label{eq:y3r_dot}
       \dot{\qy}_{3,\mathrm{r}} = \bar{\qy}_{3,\mathrm{r}} - \hat{\beta}_{3,\max} \dot{\qb}_2(\hat{\vartheta}_{\mathrm{h},\max}, \hat{\varphi}_{\mathrm{h},\max}),
\end{align}
where define $\dot{\qb}_2(\hat{\vartheta}_{\mathrm{h},\max}, \hat{\varphi}_{\mathrm{h},\max}) \triangleq [\{\tilde{\qb}_2(\hat{\vartheta}_{\mathrm{h},\max}, \hat{\varphi}_{\mathrm{h},\max})\}^T,$ $\mathbf{0}_{M_2(\nu_2-1) \times 1}^T]^T$.

To alleviate the off-the-grid effect, we propose to apply the Newton-based refinements for $\hat{\vartheta}_{\mathrm{h},\max}$, $\hat{\varphi}_{\mathrm{h},\max}$ and $\hat{\beta}_{3,\max}$.
Specifically, $R_s$ times of local refinements are implemented with the objective of minimizing the residual energy $E_3(\vartheta_{\mathrm{h}}, \varphi_{\mathrm{h}}) = \|\dot{\qy}_{3,\mathrm{r}} - \hat{\beta}_{3,\max} \dot{\qb}_2(\vartheta_{\mathrm{h}}, \varphi_{\mathrm{h}})\|^2$.
For the $(t+1)$-th iteration, we have
\begin{subequations}\label{eq:local_refine3}
\begin{align}
       &\hat{\beta}_{3,\max}^{(t+1)} = \frac{\{\dot{\qb}_2(\hat{\vartheta}_{\mathrm{h},\max}^{(t)}, \hat{\varphi}_{\mathrm{h},\max}^{(t)})\}^H}{\|\dot{\qb}_2(\hat{\vartheta}_{\mathrm{h},\max}^{(t)}, \hat{\varphi}_{\mathrm{h},\max}^{(t)})\|^2} \bigg\{ \dot{\qy}_{3,\mathrm{r}}^{(t)} \notag\\
       & \hspace{4em} + \hat{\beta}_{3,\max}^{(t)} \dot{\qb}_2(\hat{\vartheta}_{\mathrm{h},\max}^{(t)}, \hat{\varphi}_{\mathrm{h},\max}^{(t)}) \bigg\},\\
       &\dot{\qy}_{3,\mathrm{r}}^{(t+1)} = \dot{\qy}_{3,\mathrm{r}}^{(t)} - (\hat{\beta}_{3,\max}^{(t+1)} - \hat{\beta}_{3,\max}^{(t)}) \dot{\qb}_2(\hat{\vartheta}_{\mathrm{h},\max}^{(t)}, \hat{\varphi}_{\mathrm{h},\max}^{(t)}),\\
       &\left[\hat{\vartheta}_{\mathrm{h},\max}^{(t+1)}, \hat{\varphi}_{\mathrm{h},\max}^{(t+1)}\right] = \left[\hat{\vartheta}_{\mathrm{h},\max}^{(t)}, \hat{\varphi}_{\mathrm{h},\max}^{(t)}\right] - \qQ^{-1}\qx,
\end{align}
\end{subequations}
where $\qQ \in \bbC^{2 \times 2}$ denotes the Hessian matrix for $E_3(\vartheta_{\mathrm{h}}, \varphi_{\mathrm{h}})$ and $\qx \in \bbC^{2 \times 1}$ denotes the gradient vector.
Note that the updates of $\hat{\vartheta}_{\mathrm{h},\max}^{(t+1)}$ and $\hat{\varphi}_{\mathrm{h},\max}^{(t+1)}$ are adopted only when $\qQ$ is negative definite and $\|\dot{\qy}_{3,\mathrm{r}}^{(t+1)}\|^2 \leq \|\dot{\qy}_{3,\mathrm{r}}^{(t)}\|^2$, which ensures the effective reduction of residual energy and guarantees the convergence of the refinement procedure.

After local refinement, we update $\mathcal{P}_3 = \{\mathcal{P}_3, (\hat{\vartheta}_{\mathrm{h},\max}^{(t+1)},$ $\hat{\varphi}_{\mathrm{h},\max}^{(t+1)})\}$ and $\mathcal{Q}_3 = \{\mathcal{Q}_3, \hat{\beta}_{3,\max}^{(t+1)}\}$, and implement global refinements.
In this step, each $(\hat{\vartheta}_{\mathrm{h}}, \hat{\varphi}_{\mathrm{h}}) \in \mathcal{P}_3$ and $\hat{\beta}_3 \in \mathcal{Q}_3$ are iteratively refined by performing local refinement for $R_G$ times.
To further reduce the energy of the residual vector $\dot{\qy}_{3,\mathrm{r}}$, $\hat{\boldsymbol{\beta}}_3 \triangleq [\hat{\beta}_{3,1},\dots,\hat{\beta}_{3,|\mathcal{Q}_3|}]^T$ should be updated based on the LS method, as follows
\begin{align}\label{eq:beta3_hat}
       \hat{\boldsymbol{\beta}}_3 = (\qB^H\qB + \sigma_0^2\qI_{|\mathcal{Q}_3|})^{-1}\qB^H \qy_3,
\end{align}
where $\qB \triangleq [\tilde{\qb}_2(\vartheta_{\mathrm{h},1}, \varphi_{\mathrm{h},1}), \dots, \tilde{\qb}_2(\vartheta_{\mathrm{h},|\mathcal{Q}_3|}, \varphi_{\mathrm{h},|\mathcal{Q}_3|})]$.
Then, we update $\dot{\qy}_{3,\mathrm{r}} = \qy_3 - \qB \hat{\boldsymbol{\beta}}_3$.

The algorithm terminates when the residual energy falls below the expected noise energy, i.e., $\|\dot{\qy}_{3,\mathrm{r}}\|^2 < \tau_3$, where $\tau_3 = \sigma^2_\mathrm{n} M_2$ denotes the threshold.
Accordingly, the proposed Hybrid-Field DFrFT-NOMP algorithm is summarized in \textbf{Algorithm$\,$\ref{alg:DFrFT_NOMP}}.

\begin{algorithm}[!ht]
       \caption{Hybrid-Field DFrFT-NOMP Algorithm}
       \begin{algorithmic} [1]\label{alg:DFrFT_NOMP}
       \STATE \textbf{Initialize:} $\qy_{3,\mathrm{r}} \triangleq [\qy_3^T, \mathbf{0}_{M_2(\nu_2-1) \times 1}^T]^T$, and $\dot{\qy}_{3,\mathrm{r}} = \qy_{3,\mathrm{r}}$.
       \WHILE{$\|\dot{\qy}_{3,\mathrm{r}}\|^2 \geq \tau_3$}
       \STATE Obtain $(\hat{\vartheta}_{\mathrm{h},\max}, \hat{\varphi}_{\mathrm{h},\max})$ and $\hat{\beta}_{3,\max}$ based on \eqref{eq:coarseEst} and \eqref{eq:coarseEst_beta3}, and update $\dot{\qy}_{3,\mathrm{r}}$ via \eqref{eq:y3r_dot}.
       \STATE Refine $\hat{\vartheta}_{\mathrm{h},\max}$, $\hat{\varphi}_{\mathrm{h},\max}$, and $\hat{\beta}_{3,\max}$ by $R_\mathrm{L}$ times via \eqref{eq:local_refine3}.
       \STATE Update $\mathcal{P}_3 = \{\mathcal{P}_3, (\hat{\vartheta}_{\mathrm{h},\max}^{(t+1)}, \hat{\varphi}_{\mathrm{h},\max}^{(t+1)})\}$, and $\mathcal{Q}_3 = \{\mathcal{Q}_3, \hat{\beta}_{3,\max}^{(t+1)}\}$.
       \FOR{$i = 1, \dots, R_\mathrm{G}$}
       \FOR{each $(\hat{\vartheta}_{\mathrm{h}}, \hat{\varphi}_{\mathrm{h}}) \in \mathcal{P}_3$ and $\hat{\beta}_3 \in \mathcal{Q}_3$}
       \STATE Refine $\hat{\vartheta}_{\mathrm{h}}$, $\hat{\varphi}_{\mathrm{h}}$, and $\hat{\beta}_3$ by $R_\mathrm{L}$ times via \eqref{eq:local_refine3}.
       \ENDFOR
       \ENDFOR
       \STATE Update $\hat{\boldsymbol{\beta}}_3$ via \eqref{eq:beta3_hat} and $\dot{\qy}_{3,\mathrm{r}} = \qy_3 - \qB \hat{\boldsymbol{\beta}}_3$.
       \ENDWHILE
\end{algorithmic}
\end{algorithm}

\subsection{Path Matching}
In this subsection, we propose a subspace fitting-driven path matching algorithm to associate the path parameters by minimizing the subspace fitting residual.\footnote{The channel estimates of the low-dimensional subarray serve as a benchmark, as it can be easily obtained via the linear minimum mean square error estimation criterion \cite{XYang-26JIOT}.}
Given the acquired $\hat{\theta} \in \mathcal{P}_1$, $(\hat{\vartheta}_{\mathrm{h}}, \hat{\varphi}_{\mathrm{h}}) \in \mathcal{P}_3$, and $\hat{\beta} \in \mathcal{Q}_3$, the channel vector for the candidate path can be constructed as
\begin{align}\label{eq:hsub_cand}
       \qh_{\mathrm{sub}, \mathrm{cand}}(\hat{\theta},\hat{\vartheta}_{\mathrm{h}}, \hat{\varphi}_{\mathrm{h}},\hat{\beta}) = \hat{\beta} [\tilde{\qb}(\hat{\theta},\hat{\vartheta}_{\mathrm{h}}, \hat{\varphi}_{\mathrm{h}})]_{\mathcal{S}_2},
\end{align}
where $\mathcal{S}_2$ is the index set.
As a result, the path matching problem can be modeled as
\begin{align}\label{eq:path_matching_problem}
       &(\dot{\theta},\dot{\vartheta}_{\mathrm{h}}, \dot{\varphi}_{\mathrm{h}},\dot{\beta}) = \notag\\
       & \arg \mathop{\min}_{\hat{\theta},\hat{\vartheta}_{\mathrm{h}}, \hat{\varphi}_{\mathrm{h}},\hat{\beta}} \|\qh_{\mathrm{sub}, r} - \qh_{\mathrm{sub}, \mathrm{cand}}(\hat{\theta},\hat{\vartheta}_{\mathrm{h}}, \hat{\varphi}_{\mathrm{h}},\hat{\beta})\|_2,
\end{align}
where $\qh_{\mathrm{sub}, r}$ is the low-dimensional subarray channel estimate.
The proposed subspace fitting-driven path matching algorithm can be summarized in \textbf{Algorithm$\,$\ref{alg:Path_Matching}}.

\begin{algorithm}[!ht]
       \caption{Subspace Fitting-Driven Path Matching Algorithm}
       \begin{algorithmic} [1]\label{alg:Path_Matching}
       \STATE \textbf{Initialize:} $\mathcal{G} = \emptyset$.
       \WHILE{$\|\qh_{\mathrm{sub}, r}\|^2 \geq \|\bar{\qh}_{\mathrm{sub}, r}\|$}
       \STATE $\bar{\qh}_{\mathrm{sub}, r} = \qh_{\mathrm{sub}, r}$.
       \FOR{each $(\hat{\vartheta}_{\mathrm{h}}, \hat{\varphi}_{\mathrm{h}}) \in \mathcal{P}_3$ and $\hat{\beta}_3 \in \mathcal{Q}_3$}
       \FOR{each $\hat{\theta} \in \mathcal{P}_1$}
       \STATE Construct $\qh_{\mathrm{sub}, \mathrm{cand}}$ via \eqref{eq:hsub_cand}.
       \ENDFOR
       \ENDFOR
       \STATE Obtain $(\dot{\theta},\dot{\vartheta}_{\mathrm{h}}, \dot{\varphi}_{\mathrm{h}},\dot{\beta})$ via \eqref{eq:path_matching_problem}.
       \STATE Update $\mathcal{G} = \{\mathcal{G}, (\dot{\theta},\dot{\vartheta}_{\mathrm{h}}, \dot{\varphi}_{\mathrm{h}},\dot{\beta})\}$ and $\qh_{\mathrm{sub}, r} = \qh_{\mathrm{sub}, r} - \qh_{\mathrm{sub}, \mathrm{cand}}(\dot{\theta},\dot{\vartheta}_{\mathrm{h}}, \dot{\varphi}_{\mathrm{h}},\dot{\beta})$.
       \ENDWHILE
\end{algorithmic}
\end{algorithm}

\subsection{Overall Channel Estimation Algorithm}
Following the above subsections, the full-dimensional channel can be recovered as
\begin{align}\label{eq:h_hat}
       \hat{\qh}_1 = \frac{M}{K} \sum_{(\dot{\theta},\dot{\vartheta}_{\mathrm{h}}, \dot{\varphi}_{\mathrm{h}},\dot{\beta}) \in \mathcal{G}} \dot{\beta} \tilde{\qb}(\dot{\theta},\dot{\vartheta}_{\mathrm{h}}, \dot{\varphi}_{\mathrm{h}}),
\end{align}
and the overall low-complexity channel estimation algorithm can be summarized in \textbf{Algorithm$\,$\ref{alg:overall_estimation}}.
The convergence of \textbf{Algorithm$\,$\ref{alg:overall_estimation}} is guaranteed by the monotonic reduction of the residual energy during all Newton refinements and LS updates.
The computational complexity of \textbf{Algorithm$\,$\ref{alg:overall_estimation}} is calculated as $\mathcal{O}(K(\bar{M}_1\nu_1\log(\bar{M}_1\nu_1) + QM_2\nu_2\log(M_2\nu_2)) + K^2R_LR_G(\bar{M}_1+M_2) + K^3(\bar{M}_1+M_2+N) + K^4)$.
For comparison, the computational complexity of 3D-OMP is $\mathcal{O}(K^3S+MS)$, where $S$ denotes the size of the 3D grids.
Due to the extremely large 3D joint dictionary, the prohibitive complexity of 3D-OMP renders it infeasible for practical XL-MIMO systems, which verifies the superiority of the proposed algorithm in terms of low complexity.

\begin{algorithm}[!ht]
       \caption{Overall Low-Complexity Channel Estimation Algorithm}
       \begin{algorithmic} [1]\label{alg:overall_estimation}
       \STATE \textbf{Initialize:} $\mathcal{G} = \emptyset$.
       \STATE Obtain $\mathcal{P}_1$ based on \textbf{Algorithm$\,$\ref{alg:DFT_NOMP}}.
       \STATE Obtain $\mathcal{P}_3$ and $\mathcal{Q}_3$ based on \textbf{Algorithm$\,$\ref{alg:DFrFT_NOMP}}.
       \STATE Associate the decoupled multipath parameters based on \textbf{Algorithm$\,$\ref{alg:Path_Matching}}.
       \STATE Recovery $\hat{\qh}_1$ via \eqref{eq:h_hat}.
\end{algorithmic}
\end{algorithm}

\section{Correlation-Domain Aperture Extrapolation}\label{s:APEXcorr}
Although the antenna-domain extrapolation enables a computationally efficient aperture expansion, it suffers from limited elevation AoA estimation accuracy.
This limitation stems from the recursive nature of the prediction process, where the received noise and extrapolation errors accumulate rapidly, leading to severe SNR degradation.
Note that the statistical CSI, i.e., the spatial correlation matrix, can be obtained by temporally accumulating the instantaneous CSI during the uplink training period.
Interestingly, by skillfully selecting the elements from the spatial correlation matrix, a virtual array that decouples the elevation AoAs from the azimuth AoAs and ranges with enhanced SNR can be constructed, facilitating better aperture extrapolation than the antenna domain.
To this end, we establish the equivalent signal model of the constructed virtual array in the following.

\begin{remark}
       Based on the spatial correlation matrix at BS, the equivalent received signal of the virtual array can be modeled as a standard ULA received signal, given by
       \begin{align}\label{eq:y2}
              \qy_2 = \frac{1}{K}\sum_{k=1}^{K}\sigma^2_k\tilde{\qa}(\gamma_k) + \qn_2,
       \end{align}
       where $\qn_2 \triangleq \sigma^2_\mathrm{n}[0,\dots,1,\dots,0]^T$, and $\tilde{\qa}(\gamma_k) \in \bbC^{M_1 \times 1}$ is the steering vector corresponding to the $k$-th path, defined as
       \begin{align}
              [\tilde{\qa}(\gamma_k)]_i = e^{j\frac{2\pi}{\lambda}a_i d \gamma_k},\, a_i \in \mathcal{M}_1.
       \end{align}
\end{remark}

\begin{proof}
The spatial correlation matrix of the channel $\qh_1$ in \eqref{eq:h1} can be expressed as
\begin{align}\label{eq:correlation_matrix}
       \qR =& \mathbb{E}\{\qh_1\qh_1^H\} \notag\\
       =& \frac{M}{K} \sum_{k=1}^{K} \sum_{p=1}^{K} \mathbb{E}\{\beta_k \beta_p^*\} \tilde{\qb}(\vartheta_{\mathrm{v},k}, \varphi_{\mathrm{v},k}, \vartheta_{\mathrm{h},k}, \varphi_{\mathrm{h},k}) \notag\\
       & \hspace{5em} \times \tilde{\qb}^H(\vartheta_{\mathrm{v},p}, \varphi_{\mathrm{v},p}, \vartheta_{\mathrm{h},p}, \varphi_{\mathrm{h},p}) + \sigma^2_\mathrm{n}\qI_M, \notag\\
       =& \frac{M}{K} \sum_{k=1}^{K} \sigma^2_k \tilde{\qb}(\vartheta_{\mathrm{v},k}, \varphi_{\mathrm{v},k}, \vartheta_{\mathrm{h},k}, \varphi_{\mathrm{h},k}) \notag\\
       & \hspace{3.5em} \times \tilde{\qb}^H(\vartheta_{\mathrm{v},k}, \varphi_{\mathrm{v},k}, \vartheta_{\mathrm{h},k}, \varphi_{\mathrm{h},k}) + \sigma^2_\mathrm{n}\qI_M.
\end{align}
For analytical simplicity and in accordance with the UPA topology, $\qR$ in \eqref{eq:correlation_matrix} is partitioned into $M_1 \times M_1$ submatrices, each of size $M_2 \times M_2$, and the phase of the $(s,t)$-th element in the $(i,j)$-th submatrix can be written as \eqref{eq:R_sub}, shown at the top of the next page,
\begin{figure*}
\begin{align}\label{eq:R_sub}
       \arg [\qR(i,j)]_{(s,t)} = j\frac{2\pi}{\lambda}\sum_{k=1}^{K} \bigg\{ &(a_i - a_j) d \sin\theta_k - (a_i^2 - a_j^2) \frac{d^2\cos^2\theta_k}{2r_k} \notag\\
       & + (b_s - b_t) d\cos\theta_k\cos\phi_k -(b_s^2 - b_t^2)\frac{d^2(1-\cos^2\theta_k\cos^2\phi_k)}{2r_k} \bigg\}
\end{align}
{\noindent}\rule[-10pt]{17.5cm}{0.05em}
\end{figure*}
where the indexes $a_i,a_j \in \mathcal{M}_1$ and $b_s,b_t \in \mathcal{M}_2$.
It can be observed from \eqref{eq:R_sub} that when $a_i = -a_j$ and $b_s = b_t$ (i.e., $i+j = M_1 + 1$ and $s = t$), both the near-field quadratic phase terms and the azimuth-domain linear phase terms in the corresponding elements of the spatial correlation matrix $\qR$ are completely eliminated, yielding the expression given by
\begin{align}\label{eq:R_sub_vir}
       \arg [\qR(i,M_1 + 1-i)]_{(s,s)} =& \frac{2\pi d}{\lambda}\sum_{k=1}^{K}\left\{2a_i\sin\theta_k \right\}.
\end{align}

Based on \eqref{eq:R_sub_vir}, the elevation AoAs are decoupled from the azimuth AoAs and the ranges.
Therefore, we define the effective elevation AoA for the $k$-th path as
\begin{align}
       \gamma_k = 2\sin\theta_k.
\end{align}
By extracting and averaging the elements of $\qR$, we construct the equivalent received signal of the virtual array $\qy_2 \in \bbC^{M_1 \times 1}$, with its $i$-th element given by
\begin{align}
       [\qy_2]_i = \frac{1}{|\mathcal{M}_2|}\sum_{b_s \in \mathcal{M}_2} [\qR(i,M_1 + 1-i)]_{(s,s)}.
\end{align}
Then, $\qy_2$ can be rewritten as \eqref{eq:y2}, which completes the proof.
\end{proof}

As indicated by \eqref{eq:y2}, while correlation-domain processing achieves near-noiseless observations at non-center virtual antennas, the AWGN aggregates on the main diagonal of the spatial correlation matrix, manifesting as an impulsive noise peak at the center antenna of the virtual array.
Linear prediction-based aperture extrapolation algorithms, such as the Burg algorithm, rely on the wide-sense stationarity and the complex exponential coherent structure of the input sequence, both of which are severely disrupted by this central impulse.
As a result, the derived prediction coefficients are distorted, significantly degrading the accuracy of the aperture extrapolation.
To tackle this issue, we propose a bidirectional linear prediction-based interpolation strategy to reconstruct the corrupted observation at the center antenna of the virtual array.
Specifically, we perform forward and backward extrapolations on the sequences consisting of the first and last $\frac{M_1-1}{2}$ elements of $\qy_2$, yielding the predicted estimates $y_{\mathrm{c},\mathrm{f}}$ and $y_{\mathrm{c},\mathrm{b}}$, respectively.
The observation at the center antenna of the virtual array is then replaced by their average, i.e., $[\qy_2]_{0} = \frac{y_{\mathrm{c},\mathrm{f}} + y_{\mathrm{c},\mathrm{b}}}{2}$.
Subsequently, by performing antenna-domain extrapolation with linear prediction coefficients $\{\bar{c}_1,\bar{c}_2,\dots,\bar{c}_{\bar{P}}\}$ on the virtual signal $\qy_2$, we establish the following theorem.
\begin{theorem}\label{the:extra_signal_correlation_domain}
       In the correlation domain, the extrapolated signal $\bar{\qy}_2 \in \bbC^{\check{M}_1 \times 1}$ can be formulated as
       \begin{align}\label{eq:y2_bar}
              \bar{\qy}_2 = \sum_{k=1}^{K}g_k \bar{\qa}(\gamma_k) + \bar{\qn}_2,
       \end{align}
       where $g_k \triangleq \frac{1}{K}\sigma^2_k$, $\check{M}_1 \triangleq 2\lfloor \xi_2 M_1 \rfloor+M_1$, $\xi_2$ denotes the extrapolation factor, i.e., $\lfloor \xi_2 M_1 \rfloor$ represents the length of the forward/backward extrapolated sequence.
       $\bar{\qn}_2 \triangleq [\tilde{\qn}_2^T,\qn_2^T,\breve{\qn}_2^T]^T$ denotes the virtual received noise for the extrapolated array, $\breve{\qn}_2 \sim \mathcal{C}\mathcal{N}(\mathbf{0}_{\lfloor \xi_2 M_1 \rfloor \times 1}, \breve{\qR}_2)$ and $\tilde{\qn}_2 \sim \mathcal{C}\mathcal{N}(\mathbf{0}_{\lfloor \xi_2 M_1 \rfloor \times 1}, \tilde{\qR}_2)$ represent the noise induced by the forward and backward extrapolation processes with $\breve{\qR}_2$ defined in \eqref{eq:R2_breve} and $\tilde{\qR}_2$ defined in \eqref{eq:R2_tilde}, respectively.
       $\bar{\qa}(\gamma_k) \in \bbC^{\lfloor\xi_2 M_1\rfloor \times 1}$ denotes the steering vector corresponding to the $k$-th path for the extrapolated array, as follows
       \begin{align}
              \bar{\qa}(\gamma_k) = \left[e^{j\frac{2\pi}{\lambda}(-\frac{\check{M}_1-1}{2})d\gamma_k},\dots,e^{j\frac{2\pi}{\lambda}\frac{\check{M}_1-1}{2}d\gamma_k}\right]^T.
       \end{align}
\end{theorem}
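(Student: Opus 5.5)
The proof will follow the template of Theorem~\ref{the:extra_signal_antenna_domain}, with the virtual signal $\qy_2$ in \eqref{eq:y2} playing the role of $\qy_{\mathrm{v},m}$.

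\textbf{Step 1: reduce to a noise-free harmonic sequence plus a perturbation.}
After the bidirectional interpolation, write the central element as the noiseless value plus an interpolation error. Then $\qy_2=\sum_k g_k\tilde{\qa}(\gamma_k)+\qn_2'$, where $g_k=\sigma_k^2/K$ and the only nonzero entry of $\qn_2'$ is the centre one.

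By Lemma~\ref{lem:extrapolation}, the noiseless part $\sum_k g_k[\tilde{\qa}(\gamma_k)]_i$ satisfies a $K$-th order homogeneous difference equation whose characteristic roots are $z_k=e^{j\frac{2\pi}{\lambda}d\gamma_k}$. Consequently, any order-$\bar P\ge K$ prediction filter whose polynomial contains these roots reproduces the sequence exactly.

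\textbf{Step 2: forward recursion.}
For $i>M_1$, define the extrapolated sample by $[\bar{\qy}_2]_i=\sum_{p=1}^{\bar P}(-\bar c_p)[\bar{\qy}_2]_{i-p}$. Split it into a signal part and an error part.

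The signal part is continued as $\sum_k g_k z_k^{a_i}$. This is precisely the corresponding entry of $\bar{\qa}(\gamma_k)$ on the enlarged index range $\{-\frac{\check M_1-1}{2},\dots,\frac{\check M_1-1}{2}\}$, after re-indexing the centre.

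The error part has two sources. The first is the one-step prediction error $e_i$, which arises because the Burg coefficients $\bar c_p$ are estimates and do not annihilate the harmonics exactly. The second is the propagation of earlier errors through the same recursion.

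Stacking the $\lfloor\xi_2M_1\rfloor$ forward errors gives $\breve{\qn}_2=\qC\,\qd$. Here $\qC$ is the lower-triangular Toeplitz matrix of the impulse response of $H(z)$, namely $\bar h_0=1$ and $\bar h_n=-\sum_p \bar c_p\bar h_{n-p}$. The vector $\qd$ collects the prediction errors together with the initial perturbation inherited from the observed window.

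Modelling $\qd$ as zero-mean circular Gaussian with covariance set by the Burg final prediction-error power $\epsilon_{\bar P}$ (and by the residual interpolation error at the centre), we obtain $\breve{\qR}_2=\qC\,\mathbb{E}\{\qd\qd^H\}\qC^H$. This is the expression to be recorded as \eqref{eq:R2_breve}.

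\textbf{Step 3: backward recursion.}
Repeat the argument with the conjugate coefficients $\bar c_p^*$ running along the reversed axis. This uses the forward/backward conjugate symmetry already invoked in Section~\ref{s:APEXante}, and it yields $\tilde{\qn}_2$ with covariance $\tilde{\qR}_2$ as in \eqref{eq:R2_tilde}.

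Concatenating $[\tilde{\qn}_2^T,\qn_2^T,\breve{\qn}_2^T]^T$ with the signal part gives \eqref{eq:y2_bar}. Gaussianity of $\breve{\qn}_2$ and $\tilde{\qn}_2$ follows because each is a linear transform of Gaussian errors.

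\textbf{Main obstacle.}
The delicate step is justifying the Gaussian error model in the correlation domain. Outside the centre, $\qy_2$ is essentially noiseless, so $\qd$ is dominated by two contributions: the coefficient mismatch from Burg estimation, and the interpolation error $[\qy_2]_0-\sum_k g_k$.

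I would handle this by linearising the Burg coefficients around the noiseless annihilating polynomial. Burg stability (all poles inside the unit circle) keeps $\qC$ bounded. I would then treat the centre interpolation error as an additional zero-mean term, whose variance I would bound by the average of the two one-sided prediction-error powers. With that in place, the covariance structure follows exactly as in the proof of Theorem~\ref{the:extra_signal_antenna_domain}.
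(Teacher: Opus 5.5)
Your skeleton matches the paper's. The paper's proof also treats the signal part as a continuation of the harmonics. It writes the forward error as $\breve{\qn}_2=\breve{\qC}_2^{-1}(\breve{\boldsymbol{\eta}}_2-\breve{\qD}_2\breve{\qn}_{\mathrm{r},2})$. Here $\breve{\qC}_2^{-1}$ is exactly your impulse-response Toeplitz matrix $\qC$, and the bracket is your $\qd$. The backward part likewise uses $\bar c_p^*$ on the reversed axis.

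Where your plan departs from the paper, it would not reproduce the stated $\breve{\qR}_2$ and $\tilde{\qR}_2$. The paper gets their explicit two-case form in three steps:
\begin{itemize}
\item It takes the innovations to be independent of the inherited window noise, so $\mathbb{E}\{\qd\qd^H\}=\breve{\sigma}^2_{\eta_2}\qI+\breve{\qD}_2\mathbb{E}\{\breve{\qn}_{\mathrm{r},2}\breve{\qn}_{\mathrm{r},2}^H\}\breve{\qD}_2^H$.
\item It notes that the last $\bar P$ samples of $\qy_2$ contain the centre only if $\bar P\ge\frac{M_1+1}{2}$. Otherwise $\breve{\qn}_{\mathrm{r},2}\approx\mathbf{0}$; if so, its single nonzero entry $n_c$ produces the rank-one term $\sigma_c^2\breve{\qd}\breve{\qd}^H$.
\item It sets $\sigma_c^2=\frac{1}{2}\breve{\sigma}^2_{\eta_2}$. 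This corresponds to treating the forward and backward interpolation errors as independent, each with power $\breve{\sigma}^2_{\eta_2}$, so their average has half that variance.
\end{itemize}
You propose bounding $\sigma_c^2$ by the \emph{average} of the two one-sided powers. That gives only an inequality, and in exactly that independent, equal-power case it is a factor of two too large.

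The second issue is your proposed justification of the innovation model. Suppose you linearise the Burg coefficients about an annihilating polynomial. The one-step errors become $e_i\approx-\sum_p\delta\bar c_p\,s_{i-p}$ with $s_n=\sum_k g_k z_k^n$. Every $e_i$ is then driven by the same random perturbation $\delta\bar{\mathbf{c}}$; in the ideal correlation domain that perturbation traces back to the single centre error. Their covariance therefore has rank at most $K$ and is shaped by the harmonics, rather than being $\epsilon_{\bar P}\qI$.

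That argument would contradict, not support, the white term $\breve{\sigma}^2_{\eta_2}\qI$ in the stated covariances. So the covariance would not ``follow exactly as in Theorem~\ref{the:extra_signal_antenna_domain}''. The paper does not derive whiteness at all. It postulates, via the Wold decomposition, that the innovations are $\mathcal{C}\mathcal{N}(\mathbf{0},\breve{\sigma}^2_{\eta_2}\qI)$, with $\breve{\sigma}^2_{\eta_2}$ the Burg output power, independent of the window noise. You should state that as an explicit modelling assumption rather than try to obtain it by linearisation.
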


\begin{proof}
See Appendix\,\ref{proof:the:extra_signal_correlation_domain}.
\end{proof}

Then, we establish the theoretical performance limit for the correlation-domain extrapolated signals by deriving its corresponding CRB in the following theorem.

\begin{theorem}\label{the:CRBcorr}
       Based on the extrapolated signal $\bar{\qy}_2$ in \eqref{eq:y2_bar}, the FIM for elevation AoA vector $\boldsymbol{\theta}$, with the complex path gain vector $\mathbf{g} \triangleq [g_1,\dots,g_K]^T$ treated as the nuisance parameter, is given by
       \begin{align}\label{eq:F_eq_corr}
              \mathbf{F}_{\mathrm{eq,corr}} =& 2 \boldsymbol{\Lambda}_{g} \bigg(\Re\{(\dot{\mathbf{A}}_{\gamma}\mathbf{J})^H \mathbf{R}_{\bar{n},2}^{-1} \dot{\mathbf{A}}_{\gamma}\mathbf{J}\} \notag\\
              & - \Re\{(\dot{\mathbf{A}}_{\gamma}\mathbf{J})^H\mathbf{R}_{\bar{n},2}^{-1} \bar{\mathbf{A}}_{\gamma}\} \{\Re\{\bar{\mathbf{A}}_{\gamma}^H \mathbf{R}_{\bar{n},2}^{-1}\bar{\mathbf{A}}_{\gamma}\}\}^{-1} \notag\\
              & \times \Re\{\bar{\mathbf{A}}_{\gamma}^H \mathbf{R}_{\bar{n},2}^{-1} \dot{\mathbf{A}}_{\gamma}\mathbf{J}\} \bigg) \boldsymbol{\Lambda}_{g},
       \end{align}
       where $\boldsymbol{\Lambda}_{g} \triangleq \mathrm{diag}(\mathbf{g})$, $\dot{\mathbf{A}}_{\gamma} \triangleq \frac{\partial \bar{\mathbf{A}}_{\gamma}}{\partial \boldsymbol{\gamma}^T}$ with $\bar{\mathbf{A}}_{\gamma}$ denoting the extrapolated array manifold matrix whose $k$-th column is $\bar{\mathbf{a}}(\gamma_k)$, $\boldsymbol{\gamma} \triangleq [\gamma_1,\dots, \gamma_K]^T$, $\mathbf{J} \triangleq \mathrm{diag}(2\cos\theta_1,\dots,2\cos\theta_K)$ is the Jacobian matrix for the variable transformation from $\boldsymbol{\gamma}$ to $\boldsymbol{\theta}$,
       and $\mathbf{R}_{\bar{n},2}$ is the correlation matrix for $\bar{\qn}_2$, defined as \eqref{eq:R_n_corr}.
       Thus, the CRB for $\theta_k$ is
       \begin{align}
              \mathrm{CRB}_{\mathrm{corr}}(\theta_k) = [\mathbf{F}_{\mathrm{eq,corr}}^{-1}]_{k,k}.
       \end{align}
\end{theorem}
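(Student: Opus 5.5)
We derive this FIM with the standard Slepian--Bangs argument for a deterministic signal in Gaussian noise, followed by a Schur-complement elimination of the nuisance gains and a chain-rule change of variables from $\boldsymbol{\gamma}$ to $\boldsymbol{\theta}$. By {\it Theorem\,\ref{the:extra_signal_correlation_domain}}, $\bar{\qy}_2 \sim \mathcal{CN}(\boldsymbol{\mu}, \mathbf{R}_{\bar{n},2})$ with mean $\boldsymbol{\mu} = \bar{\mathbf{A}}_{\gamma}\mathbf{g}$. We treat $\mathbf{R}_{\bar{n},2}$ in \eqref{eq:R_n_corr} as independent of the unknowns.

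The structural difference from {\it Theorem\,\ref{the:CRBante}} is the nature of the gains. Each $g_k=\sigma_k^2/K$ is a real, positive power, so the nuisance vector is real with only $K$ parameters, not $2K$ real and imaginary parts. The unknown vector is therefore $\boldsymbol{\eta} = [\boldsymbol{\theta}^T, \mathbf{g}^T]^T \in \bbR^{2K}$. For a complex Gaussian with parameter-independent covariance, the FIM is
\begin{align}
[\mathbf{F}]_{p,q} = 2\Re\left\{ \frac{\partial \boldsymbol{\mu}^H}{\partial \eta_p} \mathbf{R}_{\bar{n},2}^{-1} \frac{\partial \boldsymbol{\mu}}{\partial \eta_q}\right\}.
\end{align}

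The derivatives are computed next. Since $\gamma_k = 2\sin\theta_k$, we have $\partial \gamma_k/\partial\theta_k = 2\cos\theta_k$. Hence $\partial\boldsymbol{\mu}/\partial\boldsymbol{\theta}^T = \dot{\mathbf{A}}_{\gamma}\,\mathrm{diag}(\mathbf{g})\,\mathbf{J}$, and because diagonal matrices commute this equals $\dot{\mathbf{A}}_{\gamma}\mathbf{J}\boldsymbol{\Lambda}_g$. Here $\dot{\mathbf{A}}_{\gamma}$ is understood column-wise: its $k$-th column is $\partial\bar{\qa}(\gamma_k)/\partial\gamma_k$. We also have $\partial\boldsymbol{\mu}/\partial\mathbf{g}^T = \bar{\mathbf{A}}_{\gamma}$.

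Substituting these, and using that $\boldsymbol{\Lambda}_g$ is real so it can be pulled out of $\Re\{\cdot\}$, the FIM has the block form
\begin{align}
\mathbf{F} = 2\begin{bmatrix} \boldsymbol{\Lambda}_g \Re\{(\dot{\mathbf{A}}_{\gamma}\mathbf{J})^H\mathbf{R}_{\bar{n},2}^{-1}\dot{\mathbf{A}}_{\gamma}\mathbf{J}\}\boldsymbol{\Lambda}_g & \boldsymbol{\Lambda}_g \Re\{(\dot{\mathbf{A}}_{\gamma}\mathbf{J})^H\mathbf{R}_{\bar{n},2}^{-1}\bar{\mathbf{A}}_{\gamma}\} \\ \Re\{\bar{\mathbf{A}}_{\gamma}^H\mathbf{R}_{\bar{n},2}^{-1}\dot{\mathbf{A}}_{\gamma}\mathbf{J}\}\boldsymbol{\Lambda}_g & \Re\{\bar{\mathbf{A}}_{\gamma}^H\mathbf{R}_{\bar{n},2}^{-1}\bar{\mathbf{A}}_{\gamma}\}\end{bmatrix}.
\end{align}
The equivalent FIM for $\boldsymbol{\theta}$ is the Schur complement of the $\mathbf{g}$-block, $\mathbf{F}_{\theta\theta} - \mathbf{F}_{\theta g}\mathbf{F}_{gg}^{-1}\mathbf{F}_{g\theta}$. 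The overall factor of $2$ cancels within the inverse, and factoring $\boldsymbol{\Lambda}_g$ out on both sides gives exactly \eqref{eq:F_eq_corr}. The CRB then follows from the block matrix inversion lemma: $[\mathbf{F}^{-1}]_{\theta\theta} = \mathbf{F}_{\mathrm{eq,corr}}^{-1}$, whose $k$-th diagonal entry is $\mathrm{CRB}_{\mathrm{corr}}(\theta_k)$.

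The main obstacle is justifying the Gaussian model with covariance $\mathbf{R}_{\bar{n},2}$, in particular its independence from $\boldsymbol{\theta}$ and $\mathbf{g}$. In the actual construction the extrapolation coefficients are fitted to the data, so the induced noise in principle depends on the signal. I would handle this as in {\it Theorem\,\ref{the:CRBante}}: adopt the model of {\it Theorem\,\ref{the:extra_signal_correlation_domain}}, with the prediction coefficients $\{\bar{c}_p\}$ treated as fixed, so that the covariance-derivative term $\mathrm{tr}(\mathbf{R}^{-1}\partial\mathbf{R}\,\mathbf{R}^{-1}\partial\mathbf{R})$ vanishes. I would also check that $\Re\{\bar{\mathbf{A}}_{\gamma}^H\mathbf{R}_{\bar{n},2}^{-1}\bar{\mathbf{A}}_{\gamma}\}$ is invertible, which holds when the $\gamma_k$ are distinct and $\check{M}_1 > K$.
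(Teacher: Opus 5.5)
Your proposal is correct and follows the paper's proof essentially step for step: the Slepian--Bangs FIM for the real parameter vector $[\boldsymbol{\theta}^T,\mathbf{g}^T]^T$, the chain-rule derivative $\dot{\mathbf{A}}_{\gamma}\mathbf{J}\boldsymbol{\Lambda}_g$, and a Schur complement over the $K$ real gains; your remarks that $\mathbf{g}$ is real and that $\mathbf{R}_{\bar{n},2}$ does not depend on the parameters only make explicit what the paper assumes tacitly. One caveat applies to both your argument and the paper's: since $\mathbf{R}_{\mathrm{c},2}=\mathrm{diag}(0,\dots,0,\sigma_c^2,0,\dots,0)$, the $M_1-1$ noise-free non-center virtual entries give zero rows and columns in \eqref{eq:R_n_corr}, so $\mathbf{R}_{\bar{n},2}^{-1}$ does not actually exist, and the formula is only meaningful with some regularization or pseudo-inverse of $\mathbf{R}_{\bar{n},2}$.
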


\begin{proof}
See Appendix\,\ref{proof:the:CRBcorr}.
\end{proof}

The estimation of the effective elevation AoAs from $\bar{\qy}_2$ follows an identical algorithmic framework to the Ex-DFT-NOMP algorithm.
By applying the procedure in \textbf{Algorithm\,\ref{alg:DFT_NOMP}} to the signal $\bar{\qy}_2$, we obtain the parameter set $\mathcal{P}_2$, where $\hat{\gamma} \in \mathcal{P}_2$ denotes the estimated effective elevation AoAs.
Note that the termination condition is modified to $\|\dot{\qy}_{2,\mathrm{r}}\|^2 < \tau_2$, with the threshold $\tau_2$ defined as
\begin{align}
       \tau_2 = \mathrm{tr}(\breve{\qR}_2) + \mathrm{tr}(\tilde{\qR}_2) + \sigma_c^2.
\end{align}
Additionally, the path matching procedure in \textbf{Algorithm$\,$\ref{alg:Path_Matching}} is readily applicable to the mixed AoAs $\hat{\gamma} \in \mathcal{P}_2$ with $\hat{\theta} = \arcsin(\frac{\hat{\gamma}}{2})$.
Thus, the full-dimensional channel can be recovered via \textbf{Algorithm$\,$\ref{alg:overall_estimation}} by incorporating the aforementioned adjustments.

\begin{remark}
Since the spatial correlation matrix effectively suppresses ambient noise, this correlation-domain scheme inherently operates in an equivalent high-SNR regime.
Under these conditions, the accumulated prediction error rapidly overwhelms the resolution gain during the recursive process, thereby dictating a stricter upper bound on the aperture expansion than that in the antenna domain.
As evidenced by Fig.\,\ref{fig:NMSE-CRBaoa_SNR}(b), this mechanism restricts the maximum achievable extrapolated aperture to approximately $1.5$ times the original physical size.
\end{remark}

\begin{remark}
Both proposed aperture extrapolation schemes effectively enhance the elevation AoA estimation accuracy, while exhibiting distinct characteristics.
The antenna-domain extrapolation rapidly synthesizes an enlarged aperture utilizing the instantaneous received signal, which is suitable for fast-varying scenarios.
In contrast, the correlation-domain extrapolation exploits the spatial correlation matrix to mitigate noise, thereby achieving higher estimation accuracy.
\end{remark}

\section{Numerical Results}\label{s:Simulations}
In this section, we conduct simulations to validate the effectiveness and superiority of the proposed low-complexity channel estimation framework.
In the simulations, the considered XL-MIMO system operates at $28$\,GHz, and we perform $5000$ channel realizations for each SNR.
Unless otherwise specified, we set $M_1=15$, $M_2=155$, $\bar{M}_1 = 25$, $N = 170$, $|\mathcal{S}_2| = 32$, $d=\frac{\lambda}{2}$, $\nu_1 = \nu_2 = 4$, $K_\mathrm{n} = K_\mathrm{f} =2$, $R_\mathrm{L} = R_\mathrm{G} = 5$, $\sigma_0^2 = 10^{-5}$, $\theta_k \sim \mathcal{U}(0,\frac{\pi}{4})$, $\phi_k \sim \mathcal{U}(0,\frac{\pi}{3})$, and $r_{k,t} \sim \mathcal{U}(3.4\,\mathrm{m}, 86.6\,\mathrm{m})$.

\begin{figure}[htbp]
       \centering
       \subfloat[]{
       \includegraphics[width=0.2\textwidth]{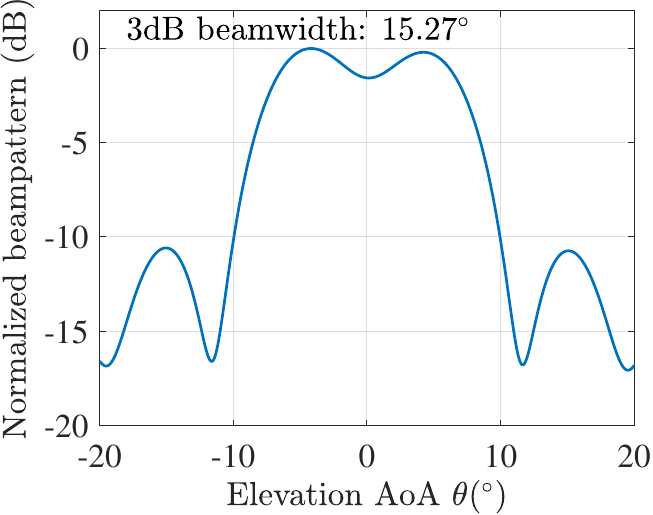}}\,
       \subfloat[]{
       \includegraphics[width=0.2\textwidth]{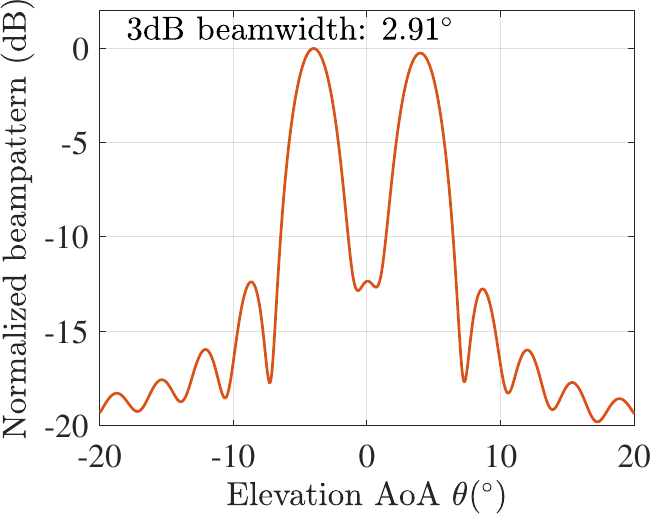}}\\
       \subfloat[]{
       \includegraphics[width=0.2\textwidth]{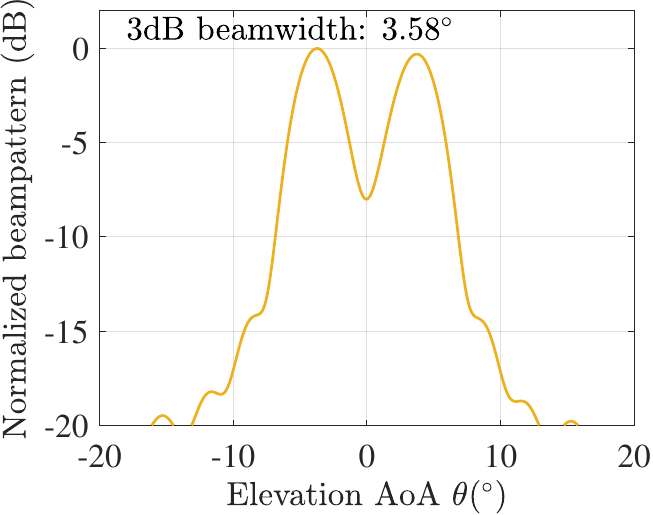}}\,
       \subfloat[]{
       \includegraphics[width=0.2\textwidth]{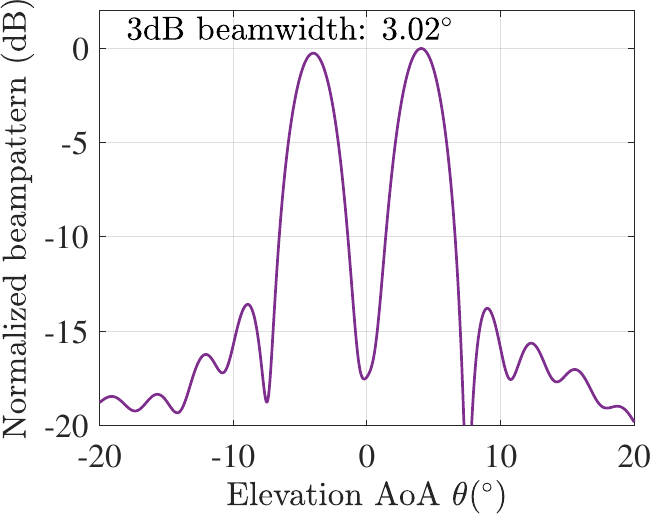}}\\
       \caption{Normalized beampattern for (a) the small aperture with $M_1 = 15$, (b) the large aperture with $M_1 = 35$, (c) the antenna-domain extrapolated aperture with $\bar{M}_1 = 35$, and (d) the correlation-domain extrapolated aperture with $\check{M}_1 = 35$.}
       \label{fig:Beampattern}
\end{figure}

First, we evaluate the normalized beampatterns of different array configurations in Fig.\,\ref{fig:Beampattern} to verify the elevation AoA resolution enhancement achieved by the proposed extrapolation schemes, in which two closely spaced signals with elevation AoAs of $-4^{\circ}$ and $4^{\circ}$ are considered.
As shown in Fig.\,\ref{fig:Beampattern}(a), the small-aperture array with $M_1=15$ exhibits severe spectral peak overlapping restricted by the Rayleigh resolution limit, thereby forming a broad mainlobe with a 3-dB beamwidth of $15.27^{\circ}$ and failing to resolve the two signals.
Conversely, the physically large array with $M_1=35$ in Fig.\,\ref{fig:Beampattern}(b) resolves the two peaks, reducing the beamwidth to $2.91^{\circ}$.
Fig.\,\ref{fig:Beampattern}(c) and Fig.\,\ref{fig:Beampattern}(d) illustrate the beampatterns of the extrapolated arrays with $\bar{M}_1=35$ and $\check{M}_1=35$ generated by the antenna-domain and correlation-domain extrapolation schemes, respectively.
It is evident that both extrapolation schemes effectively separate the previously overlapped peaks and enhance the elevation AoA resolution.
Benefiting from the noise-suppression capability of the spatial correlation matrix, the correlation-domain extrapolation exhibits sharper mainlobes (3-dB beamwidth of $3.02^{\circ}$ versus $3.58^{\circ}$) and forms a deeper null between the two closely spaced peaks compared to the antenna-domain extrapolation.
These results demonstrate the effectiveness of the proposed aperture extrapolation techniques in enhancing the AoA resolution under constrained hardware architectures.

\begin{figure}[htbp]
       \centering
       \subfloat[]{
       \includegraphics[width=0.2\textwidth]{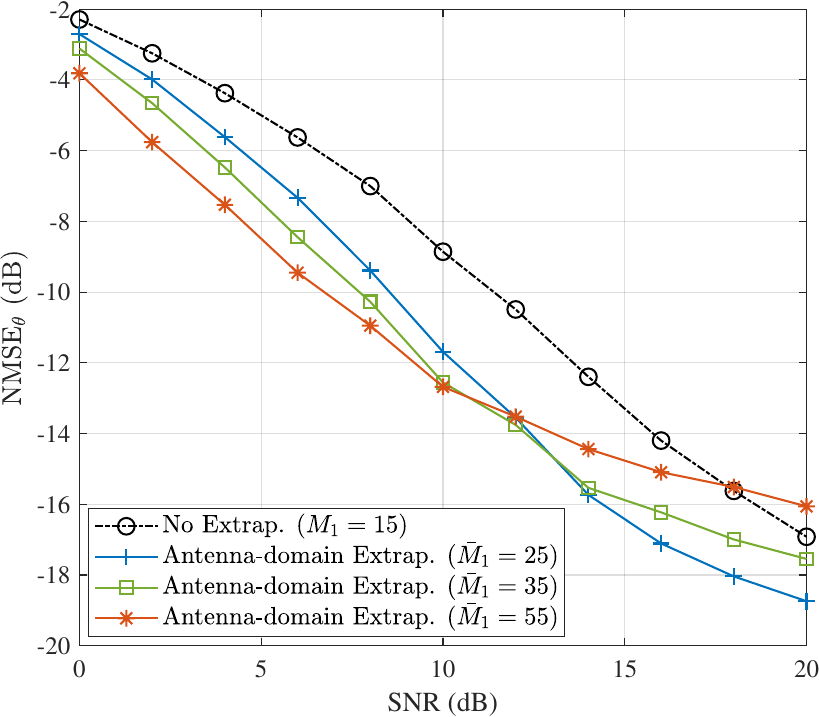}}\,
       \subfloat[]{
       \includegraphics[width=0.2\textwidth]{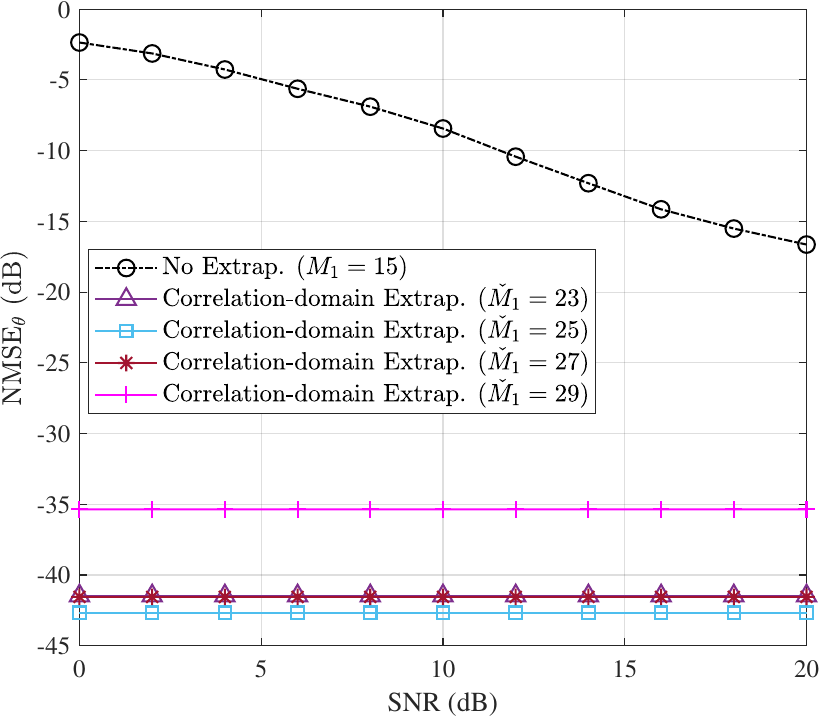}}\\
       \caption{NMSE of the elevation AoA estimates versus SNR for the original array and the extrapolated arrays (a) in the antenna domain with $\bar{M}_1 \in \{25, 35, 55\}$, and (b) in the correlation domain with $\check{M}_1 \in \{23, 25, 27, 29\}$.}\label{fig:NMSEaoa_SNR}
\end{figure}

Then, we study the estimation accuracy of the elevation AoAs.
Specifically, the normalized mean square error (NMSE) of the elevation AoA estimates $\hat{\boldsymbol{\theta}}$, defined as $\mathrm{NMSE}_\theta = \mathbb{E}(\|\boldsymbol{\theta}-\hat{\boldsymbol{\theta}}\|^2/\|\boldsymbol{\theta}\|^2)$, versus SNR for the original array and the extrapolated arrays in the antenna domain with $\bar{M}_1 \in \{25, 35, 55\}$ and in the correlation domain with $\check{M}_1 \in \{23, 25, 27, 29\}$ is depicted in Fig.\,\ref{fig:NMSEaoa_SNR}(a) and Fig.\,\ref{fig:NMSEaoa_SNR}(b), respectively.
Overall, both the antenna-domain and correlation-domain extrapolation schemes break the physical aperture limitation, and their estimation accuracy outperforms that of the original array.
Furthermore, benefiting from the noise-suppression capability of the spatial correlation matrix and the bidirectional linear prediction-based interpolation strategy, the correlation-domain extrapolation scheme achieves superior estimation accuracy compared to the antenna-domain extrapolation scheme.
As can be observed in Fig.\,\ref{fig:NMSEaoa_SNR}(a), a larger extrapolated aperture (e.g., $\bar{M}_1=55$) achieves the optimal NMSE performance in the low SNR region by sharpening the spatial spectrum, while it performs worse than smaller extrapolated apertures (e.g., $\bar{M}_1=35$ or $25$) in the high SNR region.
This is because the linear prediction-based aperture extrapolation introduces and recursively amplifies prediction errors alongside resolution gains.
In the low SNR regime, the ambient noise outweighs the accumulated extrapolation error, rendering the extrapolation-induced performance loss marginal.
Thus, the resolution gain from aperture expansion significantly improves the estimation performance.
However, as the SNR increases and the ambient noise power decreases, a critical threshold is reached where the accumulated extrapolation error begins to dominate the estimation accuracy.
Consequently, in the high SNR regime, the accumulated error induced by excessive extrapolation overwhelms the resolution gain, thereby degrading the estimation performance.
The correlation-domain extrapolation in Fig.\,\ref{fig:NMSEaoa_SNR}(b) exhibits a similar phenomenon, where the NMSE with $\check{M}_1=29$ is higher than those with $\check{M}_1 \in \{23, 25, 27\}$.
Notably, the NMSE curves for the correlation-domain extrapolation remain nearly flat across the considered SNR regime.
This stems from the fact that the ambient noise is effectively isolated onto the main diagonal of the spatial correlation matrix and subsequently mitigated by the bidirectional interpolation, rendering the estimation accuracy dominated by the algorithmic error floor rather than the ambient noise.

\begin{figure}[htbp]
       \centering
       \subfloat[]{
       \includegraphics[width=0.2\textwidth]{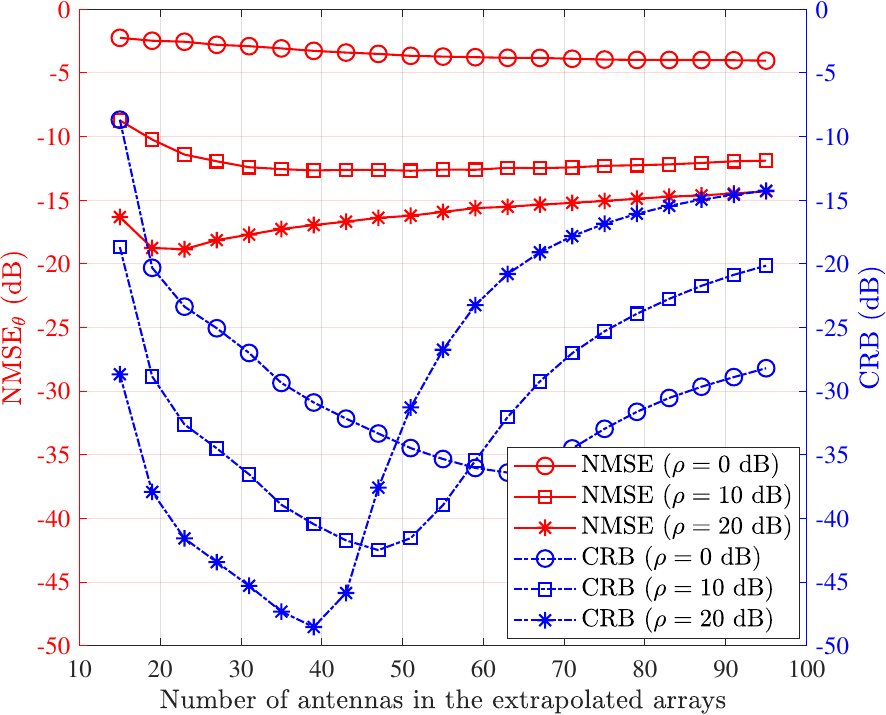}}\,
       \subfloat[]{
       \includegraphics[width=0.2\textwidth]{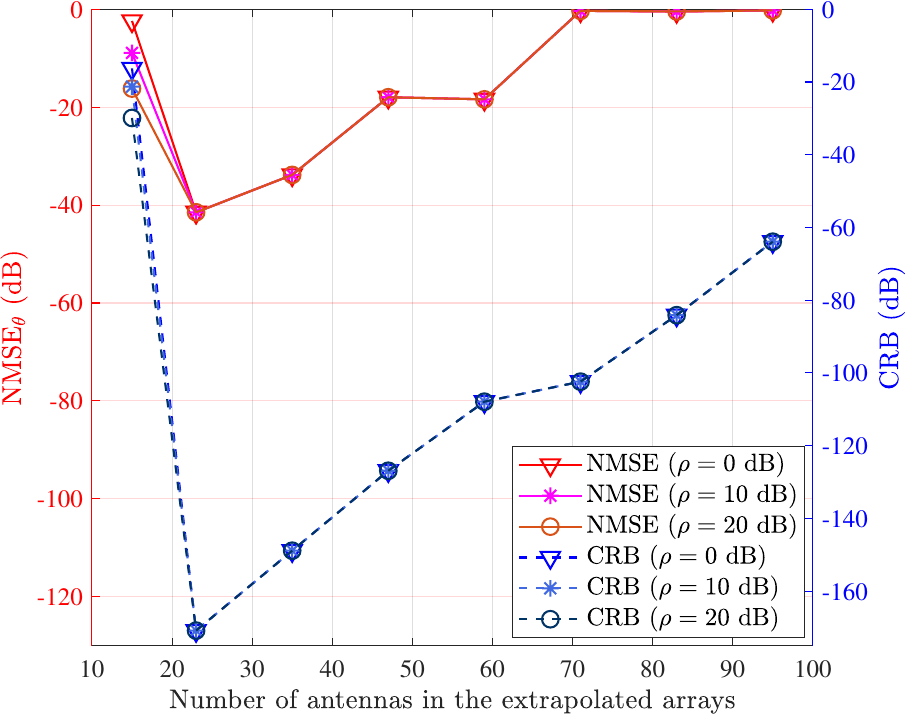}}\\
       \caption{NMSE and CRB of the elevation AoA estimates versus the number of antennas in the original array and the extrapolated arrays under different SNRs with $\rho \in \{0, 10, 20\}\,\mathrm{dB}$ (a) in the antenna domain, and (b) in the correlation domain.}\label{fig:NMSE-CRBaoa_SNR}
\end{figure}

To investigate the impact of the extrapolated aperture on the elevation AoA estimation performance, we depict the NMSE and CRB of the elevation AoA estimates versus the number of antennas in the original array and the extrapolated arrays under different SNRs with $\rho \in \{0, 10, 20\}\,\mathrm{dB}$ in the antenna domain and in the correlation domain in Fig.\,\ref{fig:NMSE-CRBaoa_SNR}(a) and Fig.\,\ref{fig:NMSE-CRBaoa_SNR}(b), respectively.
It can be observed that both the NMSE and CRB curves initially decrease and subsequently increase.
This phenomenon embodies a trade-off between the resolution enhancement and the prediction errors introduced and recursively amplified by aperture extrapolation.
When the number of extrapolated antennas is relatively small, the resolution gain dominates.
However, as the aperture becomes excessively large, the recursively accumulated prediction errors dictate the performance.
Furthermore, a noticeable gap exists between the NMSE and the CRB in both antenna and correlation domains.
This discrepancy arises because the aperture extrapolation process introduces colored noise, which is not effectively mitigated by the proposed Ex-DFT-NOMP algorithm.
Note that asymptotically approaching the CRB would necessitate formulating a weighted non-linear least squares estimator coupled with a multi-dimensional joint search, triggering a prohibitive computational complexity explosion.
Particularly, the low CRB floor in Fig.\,\ref{fig:NMSE-CRBaoa_SNR}(b) stems from the fact that its derivation relies on the ideal spatial correlation matrix.
By constructing a virtual array, i.e., \eqref{eq:y2}, and employing bidirectional interpolation, the ambient noise is heavily suppressed, thereby establishing a remarkably low theoretical bound.

\begin{figure}[htbp]
       \centering
       \includegraphics[width=0.27\textwidth]{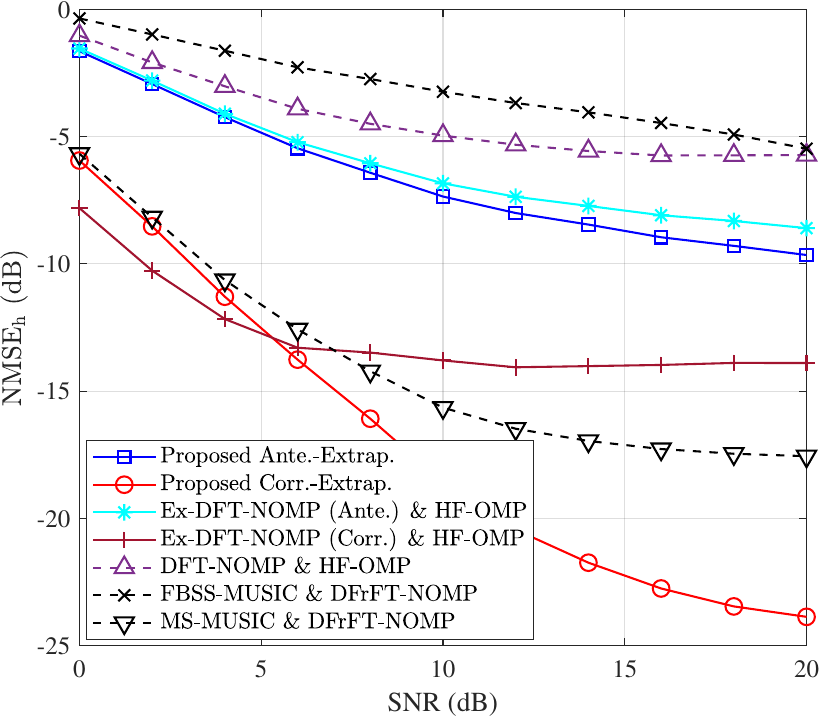}
       \caption{NMSE of the channel estimates versus SNR for different algorithms.}\label{fig:NMSEh_SNR}
\end{figure}

Finally, Fig.\,\ref{fig:NMSEh_SNR} evaluates the NMSE of the channel estimates, defined as $\mathrm{NMSE}_\mathrm{h} = \mathbb{E}(\|\qh_1-\hat{\qh}_1\|^2/\|\qh_1\|^2)$, versus SNR for different algorithms.
For a fair comparison, the baselines adopt the decoupled structure and the identical path matching procedure, as direct 3D channel estimation is computationally prohibitive.
Specifically, the baselines include the forward-backward spatial smoothing (FBSS)-/multiple snapshot (MS)-MUSIC combined with DFrFT-NOMP, alongside the hybrid field (HF)-OMP paired with either DFT-NOMP or Ex-DFT-NOMP.
The FBSS-MUSIC baseline exhibits the worst performance, since the limited vertical aperture renders its subspace orthogonality highly noise-sensitive.
The schemes utilizing Ex-DFT-NOMP significantly outperform the DFT-NOMP baseline, verifying that the proposed aperture extrapolation circumvents the elevation AoA resolution bottleneck.
Moreover, by executing continuous-domain Newtonized refinements, the proposed algorithm mitigates off-grid errors, thereby surpassing the OMP-based baselines.
Overall, the proposed antenna-domain scheme achieves the optimal NMSE performance among the single-snapshot algorithms, while the correlation-domain scheme exhibits the best performance against all baselines in the moderate-to-high SNR regimes, confirming the superiority of the proposed framework.

\section{Conclusion}\label{s:Conclusion}
In this paper, we proposed a low-complexity channel estimation framework for non-square UPA-assisted XL-MIMO systems.
First, an antenna-domain extrapolation scheme was developed to synthesize an enlarged vertical aperture, effectively addressing the elevation resolution bottleneck.
In addition, we derived the CRB for the elevation AoA estimates to reveal the theoretical limit of the maximum achievable extrapolated aperture.
Subsequently, an Ex-DFT-NOMP algorithm and a hybrid-field DFrFT-NOMP algorithm were provided to sequentially extract the elevation AoAs, as well as the azimuth AoAs, ranges, and complex gains.
A subspace fitting-driven path matching algorithm was then designed to pair these decoupled parameters for channel reconstruction.
To overcome the accuracy bottleneck induced by recursive noise accumulation, we further proposed a correlation-domain extrapolation scheme based on the structural properties of the spatial correlation matrix.
The CRB for the correlation-domain scheme was also derived to evaluate its noise-suppression capability.
Finally, numerical results verified the effectiveness and superiority of the proposed framework.

\appendices
\section{Proof of Lemma\,\ref{lem:extrapolation}}\label{proof:lem:extrapolation}
Let $z_k \triangleq e^{j\frac{2\pi}{\lambda}d\sin\theta_k}$, and define the equivalent amplitude as $\alpha_{k,m} \triangleq \frac{1}{\sqrt{K}}\beta_k [\tilde{\qb}_2(\vartheta_{\mathrm{h},k}, \varphi_{\mathrm{h},k})]_m e^{-j\frac{\pi(M_1+1)d\sin\theta_k}{\lambda}}$.
Then, $[\mathbf{h}_{\mathrm{v},m}]_i$ can be expressed as $[\mathbf{h}_{\mathrm{v},m}]_i=\sum_{k=1}^K\alpha_{k,m}z_k^i$.
We construct a $K$-th order annihilating polynomial with roots $\{z_k\}_{k=1}^K$ as $P(z)=\prod_{k=1}^K(1-z_kz^{-1})$.
Since $P(z_k)=0$ holds for any $k$, we have $z_k^i+\sum_{k=1}^K\bar{c}_k z_k^{i-k}=0$.
Thus, we obtain
\begin{align}
       [\qh_{\mathrm{v},m}]_i+\bar{c}_1[\qh_{\mathrm{v},m}]_{i-1}+\dots+\bar{c}_K[\qh_{\mathrm{v},m}]_{i-K}=0.
\end{align}
With distinct $\theta_k$, $\{z_k\}_{k=1}^K$ forms a set of $K$ distinct roots, and thus the coefficients $\{\bar{c}_1,\dots,\bar{c}_K\}$ are uniquely determined, which completes the proof.

\section{Proof of Theorem\,\ref{the:extra_signal_antenna_domain}}\label{proof:the:extra_signal_antenna_domain}
The extrapolated signal $\bar{\qy}_{\mathrm{v},m}$ in \eqref{eq:yvm_bar} can be straightforwardly obtained by concatenating the forward and backward extrapolated sequences.
Regarding the additive noise component $\bar{\qn}_{\mathrm{v},m} \triangleq [\tilde{\qn}_{\mathrm{v},m}^T,\qn_{\mathrm{v},m}^T,\breve{\qn}_{\mathrm{v},m}^T]^T$, the recursive extrapolation process fundamentally reshapes its statistical properties.
Since the aperture extrapolation process is essentially a linear combination, the extrapolated noise components $\tilde{\qn}_{\mathrm{v},m}$ and $\breve{\qn}_{\mathrm{v},m}$ preserve the zero-mean circularly symmetric complex Gaussian distribution.
Consequently, the subsequent derivation focuses on characterizing their variances.

By defining an augmented noise sequence $\{\breve{w}_i\}$ concatenating $\qn_{\mathrm{v},m}$ and $\breve{\qn}_{\mathrm{v},m}$, the noise induced by the forward extrapolation process can be recursively modeled as
\begin{align}\label{eq:nv_breve_m}
       [\breve{\qn}_{\mathrm{v},m}]_i = \sum_{p=1}^{P}(-c_p) \breve{w}_{M_1+i-p} + [\breve{\boldsymbol{\eta}}_1]_i,
\end{align}
where $\breve{\boldsymbol{\eta}}_1 \in \bbC^{\lfloor \xi_1 M_1 \rfloor \times 1}$ denotes the noise generated during the current linear prediction process.
According to the Wold decomposition theorem \cite{hayes1996}, $\breve{\boldsymbol{\eta}}_1$ is a zero-mean white noise sequence.
Furthermore, given the Gaussian nature of the background noise and the linear property of the Burg algorithm, $\breve{\boldsymbol{\eta}}_1$ can be modeled as a circularly symmetric complex Gaussian process \cite{schreier10}, i.e., $\breve{\boldsymbol{\eta}}_1 \sim \mathcal{C}\mathcal{N}(\mathbf{0}_{\lfloor \xi_1 M_1 \rfloor \times 1}, \breve{\sigma}^2_{\eta_1}\mathbf{I}_{\lfloor \xi_1 M_1 \rfloor})$.
The variance $\breve{\sigma}^2_{\eta_1}$ represents the power of the prediction error, which will be directly output by the Burg algorithm without specific estimation.
Rearranging \eqref{eq:nv_breve_m} into a matrix form yields
\begin{align}\label{eq:nv_breve_matrix}
       \breve{\qn}_{\mathrm{v},m} = \breve{\qC}_1^{-1}(\breve{\boldsymbol{\eta}}_1 - \breve{\qD}_1 \breve{\qn}_{\mathrm{r},m}),
\end{align}
where $\breve{\qD}_1 \triangleq [\dot{\qD}_1; \mathbf{0}_{(\lfloor \xi_1 M_1 \rfloor - P) \times P}]$, $\breve{\qC}_1 \in \bbC^{\lfloor \xi_1 M_1 \rfloor \times \lfloor \xi_1 M_1 \rfloor}$, $\dot{\qD}_1 \in \bbC^{P \times P}$, and $\breve{\qn}_{\mathrm{r},m} \in \bbC^{P \times 1}$ respectively denote
\begin{subequations}
\begin{align}
       \breve{\qC}_1 \triangleq &
       \begin{bmatrix}
       1 & 0 & 0 & \dots & 0 \\
       c_1 & 1 & 0 & \dots & 0 \\
       c_2 & c_1 & 1 & \dots & 0 \\
       \vdots & \vdots & \vdots & \ddots & 0 \\
       0 & \dots & c_2 & c_1 & 1
       \end{bmatrix},\\
       \dot{\qD}_1 \triangleq &
       \begin{bmatrix}
       c_P & c_{P-1} & c_{P-2} & \dots & c_1 \\
       0 & c_P & c_{P-1} & \dots & c_2 \\
       0 & 0 & c_P & \dots & c_3 \\
       \vdots & \vdots & \vdots & \ddots & \vdots \\
       0 & 0 & 0 & \dots & c_P
       \end{bmatrix},\\
       \breve{\qn}_{\mathrm{r},m} \triangleq & \left[[\qn_{\mathrm{v},m}]_{M_1+1-P}, [\qn_{\mathrm{v},m}]_{M_1+2-P}, \dots, [\qn_{\mathrm{v},m}]_{M_1}\right]^T.
\end{align}
\end{subequations}
Thus, the variance for $\breve{\qn}_{\mathrm{v},m}$ is given by
\begin{align}\label{eq:R_v_breve}
       \breve{\qR}_{\mathrm{v},m} =& \mathbb{E} \left\{\breve{\qn}_{\mathrm{v},m}\breve{\qn}_{\mathrm{v},m}^H \right\} \notag\\
       \overset{\mathrm{(a)}}{=}& \breve{\qC}_1^{-1}\left(\mathbb{E}\{\breve{\boldsymbol{\eta}}_1\breve{\boldsymbol{\eta}}_1^H\} + \breve{\qD}_1\mathbb{E}\{\breve{\qn}_{\mathrm{r},m}\breve{\qn}_{\mathrm{r},m}^H\}\breve{\qD}_1^H\right)(\breve{\qC}_1^{-1})^H \notag\\
       =& \breve{\qC}_1^{-1}\left(\breve{\sigma}^2_{\eta_1}\qI_{\lfloor \xi_1 M_1 \rfloor} + \sigma^2_\mathrm{n}\breve{\qD}_1\breve{\qD}_1^H\right)(\breve{\qC}_1^{-1})^H,
\end{align}
where the procedure (a) is that $\mathbb{E}\{\breve{\boldsymbol{\eta}}_1(\breve{\qD}_1 \breve{\qn}_{\mathrm{r},m})^H\} = \mathbb{E}\{\breve{\qD}_1 \breve{\qn}_{\mathrm{r},m}\breve{\boldsymbol{\eta}}_1^H\} = 0$ as $\breve{\boldsymbol{\eta}}_1$ and $\breve{\qD}_1 \breve{\qn}_{\mathrm{r},m}$ are mutually independent and $\mathbb{E}\{\breve{\boldsymbol{\eta}}_1\} = 0$.

Then, by defining an augmented noise sequence $\{\tilde{w}_i\}$ concatenating $\tilde{\qn}_{\mathrm{v},m}$ and $\qn_{\mathrm{v},m}$, the noise induced by the backward extrapolation process can be modeled as
\begin{align}\label{eq:nv_tilde_m}
       [\tilde{\qn}_{\mathrm{v},m}]_{i} = \sum_{p=1}^{P}(-c_p^{*}) \tilde{w}_{P+1+i-p} + [\tilde{\boldsymbol{\eta}}_1]_i,
\end{align}
where $\tilde{\boldsymbol{\eta}}_1 \in \bbC^{\lfloor \xi_1 M_1 \rfloor \times 1}$ denotes the noise generated during the backward linear prediction process satisfying $\tilde{\boldsymbol{\eta}}_1 \sim \mathcal{C}\mathcal{N}(\mathbf{0}_{\lfloor \xi_1 M_1 \rfloor \times 1}, \tilde{\sigma}^2_{\eta_1}\qI_{\lfloor \xi_1 M_1 \rfloor})$.
Similar to rearranging \eqref{eq:nv_breve_m} into \eqref{eq:nv_breve_matrix}, rearranging \eqref{eq:nv_tilde_m} into a matrix form yields
\begin{align}\label{eq:nv_tilde_matrix}
       \tilde{\qn}_{\mathrm{v},m} = (\tilde{\qC}_1^{*})^{-1}(\tilde{\boldsymbol{\eta}}_1 - \tilde{\qD}_1^{*} \tilde{\qn}_{\mathrm{r},m}),
\end{align}
where $\tilde{\qD}_1 \triangleq [\ddot{\qD}_1; \mathbf{0}_{(\lfloor \xi_1 M_1 \rfloor - P) \times P}]$, $\tilde{\qC}_1 \in \bbC^{\lfloor \xi_1 M_1 \rfloor \times \lfloor \xi_1 M_1 \rfloor}$, $\ddot{\qD}_1 \in \bbC^{P \times P}$, and $\tilde{\qn}_{\mathrm{r},m} \in \bbC^{P \times 1}$ respectively denote
\begin{subequations}
\begin{align}
       \tilde{\qC}_1 \triangleq &
       \begin{bmatrix}
       1 & c_P & c_{P-1} & \dots & 0 \\
       0 & \ddots & \vdots & \vdots & \vdots \\
       0 & 0 & 1 & c_P & c_{P-1} \\
       0 & 0 & 0 & 1 & c_P \\
       0 & 0 & 0 & 0 & 1
       \end{bmatrix},\\
       \ddot{\qD}_1 \triangleq &
       \begin{bmatrix}
       c_1 & 0 & 0 & \dots & 0 \\
       c_2 & c_1 & 0 & \dots & 0 \\
       c_3 & c_2 & c_1 & \dots & 0 \\
       \vdots & \vdots & \vdots & \ddots & \vdots \\
       c_P & c_{P-1} & c_{P-2} & \dots & c_1
       \end{bmatrix},\\
       \tilde{\qn}_{\mathrm{r},m} \triangleq & \left[[\qn_{\mathrm{v},m}]_1, [\qn_{\mathrm{v},m}]_2, \dots, [\qn_{\mathrm{v},m}]_P\right]^T.
\end{align}
\end{subequations}
Thus, the variance for $\tilde{\qn}_{\mathrm{v},m}$ is derived as
\begin{align}\label{eq:R_v_tilde}
       \tilde{\qR}_{\mathrm{v},m} =& \mathbb{E} \left\{\tilde{\qn}_{\mathrm{v},m}\tilde{\qn}_{\mathrm{v},m}^H \right\} \notag\\
       =& (\tilde{\qC}_1^{*})^{-1}\!\left\{\tilde{\sigma}^2_{\eta_1}\qI_{\lfloor \xi_1 M_1 \rfloor} + \sigma^2_\mathrm{n}(\tilde{\qD}_1^{*})(\tilde{\qD}_1^{*})^H\right\}\left\{(\tilde{\qC}_1^{*})^{-1}\right\}^H\!\!,
\end{align}
which completes the proof.

\section{Proof of Theorem\,\ref{the:CRBante}}\label{proof:the:CRBante}
According to {\it Theorem\,\ref{the:extra_signal_antenna_domain}}, the extrapolated observation vector in the antenna domain can be reformulated as
\begin{align}
       \bar{\qy}_{\mathrm{v},m} = \boldsymbol{\mu}_1(\boldsymbol{\xi}_1) + \bar{\qn}_{\mathrm{v},m},
\end{align}
where $\boldsymbol{\mu}_1(\boldsymbol{\xi}_1) \triangleq \bar{\mathbf{A}}\boldsymbol{\beta}$ denotes the deterministic noise-free signal component,
$\boldsymbol{\xi}_1 \triangleq [\boldsymbol{\theta}^T, \boldsymbol{\beta}_{\mathrm{R}}^T, \boldsymbol{\beta}_{\mathrm{I}}^T]^T \in \mathbb{R}^{3K \times 1}$ is the real-valued parameter vector, $\boldsymbol{\beta}_{\mathrm{R}} \triangleq \Re(\boldsymbol{\beta})$, $\boldsymbol{\beta}_{\mathrm{I}} \triangleq \Im(\boldsymbol{\beta})$,
and we have $\bar{\qn}_{\mathrm{v},m} \sim \mathcal{C}\mathcal{N}(\mathbf{0}_{\bar{M}_1 \times 1}, \mathbf{R}_{\bar{n},\mathrm{v},m})$ with $\mathbf{R}_{\bar{n},\mathrm{v},m} \triangleq \mathbb{E}[\bar{\qn}_{\mathrm{v},m}\bar{\qn}_{\mathrm{v},m}^H]$ representing
\begin{align}\label{eq:R_n_ante}
       \mathbf{R}_{\bar{n},\mathrm{v},m} =&
       \begin{bmatrix}
              \tilde{\qR}_{\mathrm{v},m} & \mathbf{R}_{\mathrm{b},1} & \mathbf{R}_{\mathrm{f,b},1} \\
              \mathbf{R}_{\mathrm{b},1}^H & \sigma^2_\mathrm{n}\qI_{M_1} & \mathbf{R}_{\mathrm{f},1} \\
              \mathbf{R}_{\mathrm{f,b},1}^H & \mathbf{R}_{\mathrm{f},1}^H & \breve{\qR}_{\mathrm{v},m}
       \end{bmatrix},
\end{align}
where $\mathbf{R}_{\mathrm{f},1}$, $\mathbf{R}_{\mathrm{b},1}$, and $\mathbf{R}_{\mathrm{f,b},1}$ respectively denote
\begin{subequations}
\begin{align}
       \mathbf{R}_{\mathrm{f},1} &\triangleq \mathbb{E}\{\qn_{\mathrm{v},m}\breve{\qn}_{\mathrm{v},m}^H\} \notag\\
       & = \mathbb{E}\left\{\qn_{\mathrm{v},m} \{\breve{\qC}_1^{-1}(\breve{\boldsymbol{\eta}}_1 - \breve{\qD}_1 \breve{\qn}_{\mathrm{r},m})\}^H \right\} \notag\\
       & = \sigma^2_\mathrm{n} (-\breve{\qC}_1^{-1}\breve{\qD}_1)^H,\\
       \mathbf{R}_{\mathrm{b},1} &\triangleq \mathbb{E}\{\tilde{\qn}_{\mathrm{v},m}\qn_{\mathrm{v},m}^H\} \notag\\
       & = \mathbb{E}\left\{ (\tilde{\qC}_1^{*})^{-1}(\tilde{\boldsymbol{\eta}}_1 - \tilde{\qD}_1^{*} \tilde{\qn}_{\mathrm{r},m}) \qn_{\mathrm{v},m}^H \right\} \notag\\
       & = -\sigma^2_\mathrm{n} (\tilde{\qC}_1^{*})^{-1}\tilde{\qD}_1^{*},\\
       \mathbf{R}_{\mathrm{f,b},1} &\triangleq \mathbb{E}\{\tilde{\qn}_{\mathrm{v},m}\breve{\qn}_{\mathrm{v},m}^H\} \notag\\
       & = \mathbb{E}\left\{ \!(\tilde{\qC}_1^{*})^{-1}(\tilde{\boldsymbol{\eta}}_1 - \tilde{\qD}_1^{*} \tilde{\qn}_{\mathrm{r},m}) \{\breve{\qC}_1^{-1}(\breve{\boldsymbol{\eta}}_1 - \breve{\qD}_1 \breve{\qn}_{\mathrm{r},m})\}^H \!\right\} \notag\\
       & = \sigma^2_\mathrm{n} (\tilde{\qC}_1^{*})^{-1}\tilde{\qD}_1^{*}(\breve{\qC}_1^{-1}\breve{\qD}_1)^H.
\end{align}
\end{subequations}
By utilizing the Slepian-Bangs formula \cite{Slepian-54TIT}, the $(i, j)$-th element of the FIM is given by
\begin{align}
       \mathbf{F}_{i,j} = 2\Re \left\{ \left( \frac{\partial \boldsymbol{\mu}_1}{\partial [\boldsymbol{\xi}_1]_i} \right)^H \mathbf{R}_{\bar{n},\mathrm{v},m}^{-1} \left( \frac{\partial \boldsymbol{\mu}_1}{\partial [\boldsymbol{\xi}_1]_j} \right) \right\}.
\end{align}
Then, we partition the FIM $\mathbf{F}$ into a $2 \times 2$ block structure separating the parameters of interest $\boldsymbol{\theta}$ from the nuisance parameters $\{\boldsymbol{\beta}_R, \boldsymbol{\beta}_I\}$, as follows:
\begin{align}
       \mathbf{F}(\boldsymbol{\xi}_1) = 
       \begin{bmatrix}
              \mathbf{F}_{\theta,1} & \mathbf{F}_{\theta, \beta} \\
              \mathbf{F}_{\theta, \beta}^T & \mathbf{F}_{\beta}
       \end{bmatrix},
\end{align}
where
\begin{subequations}\label{eq:Ftheta-beta}
\begin{align}
       \mathbf{F}_{\theta,1} =& 2 \Re \{(\dot{\mathbf{A}}_{\theta}\boldsymbol{\Lambda}_{\beta})^H \mathbf{R}_{\bar{n},\mathrm{v},m}^{-1}(\dot{\mathbf{A}}_{\theta}\boldsymbol{\Lambda}_{\beta})\},\\
       \mathbf{F}_{\beta} =& 2 \begin{bmatrix}
              \Re(\boldsymbol{\Omega}) & -\Im(\boldsymbol{\Omega}) \\
              \Im(\boldsymbol{\Omega}) & \Re(\boldsymbol{\Omega})
       \end{bmatrix},\\
       \mathbf{F}_{\theta, \beta} =& 2 \begin{bmatrix}
              \Re(\boldsymbol{\Upsilon}^H) & -\Im(\boldsymbol{\Upsilon}^H)
       \end{bmatrix},
\end{align}
\end{subequations}
where $\boldsymbol{\Omega} \triangleq \bar{\mathbf{A}}^H \mathbf{R}_{\bar{n},\mathrm{v},m}^{-1}\bar{\mathbf{A}}$ and $\boldsymbol{\Upsilon} \triangleq \bar{\mathbf{A}}^H \mathbf{R}_{\bar{n},\mathrm{v},m}^{-1} \dot{\mathbf{A}}_{\theta} \boldsymbol{\Lambda}_{\beta}$.

To isolate the Fisher information exclusively pertaining to the elevation AoAs $\boldsymbol{\theta}$, we apply the Schur complement to eliminate the nuisance parameters, yielding the equivalent FIM as \cite{zhang2006schur}
\begin{align}\label{eq:Feq0}
       \mathbf{F}_{\mathrm{eq,ante}} = \mathbf{F}_{\theta,1} - \mathbf{F}_{\theta, \beta} \mathbf{F}_{\beta}^{-1} \mathbf{F}_{\theta, \beta}^T.
\end{align}
By utilizing the real-composite matrix representation of complex matrices \cite{schreier10}, the inversion of the real block matrix $\mathbf{F}_{\beta}$ can be directly mapped to the complex domain as
\begin{align}\label{eq:Fbeta_inv}
       \mathbf{F}_{\beta}^{-1} = \frac{1}{2} \mathcal{I}(\boldsymbol{\Omega}^{-1}),
\end{align}
where $\mathcal{I}(\cdot)$ denotes the mapping operator such that
\begin{align}
       \mathcal{I}(\boldsymbol{\Omega}^{-1}) = \begin{bmatrix}
              \Re(\boldsymbol{\Omega}^{-1}) & -\Im(\boldsymbol{\Omega}^{-1})\\
              \Im(\boldsymbol{\Omega}^{-1}) & \Re(\boldsymbol{\Omega}^{-1})
       \end{bmatrix}.
\end{align}
By substituting \eqref{eq:Fbeta_inv} and \eqref{eq:Ftheta-beta} into \eqref{eq:Feq0}, we obtain
\begin{align}
       \mathbf{F}_{\mathrm{eq,ante}} =& \mathbf{F}_{\theta,1} - 2 \Re \left\{ \mathbf{\Upsilon}^H \boldsymbol{\Omega}^{-1} \mathbf{\Upsilon} \right\}\notag\\
       =& 2 \Re \left\{ \boldsymbol{\Lambda}_{\beta}^H \dot{\mathbf{A}}_{\theta}^H \mathbf{R}_{\bar{n},\mathrm{v},m}^{-1} \mathbf{P}_{\mathrm{A}} \dot{\mathbf{A}}_{\theta} \boldsymbol{\Lambda}_{\beta} \right\},
\end{align}
where $\mathbf{P}_{\mathrm{A}} \triangleq \mathbf{I}_{\bar{M}_1} - \bar{\mathbf{A}}(\bar{\mathbf{A}}^H \mathbf{R}_{\bar{n},\mathrm{v},m}^{-1} \bar{\mathbf{A}})^{-1} \bar{\mathbf{A}}^H \mathbf{R}_{\bar{n},\mathrm{v},m}^{-1}$, which completes the proof.

\section{Proof of Theorem\,\ref{the:extra_signal_correlation_domain}}\label{proof:the:extra_signal_correlation_domain}
Similar to Appendix \ref{proof:the:extra_signal_antenna_domain}, the extrapolated signal $\bar{\qy}_2$ in \eqref{eq:y2_bar} can be straightforwardly constructed via sequence concatenation.
Regarding the additive noise component $\bar{\qn}_2 \triangleq [\breve{\qn}_2^T,\qn_2^T,\tilde{\qn}_2^T]^T$, the bidirectional interpolation at the central antenna and the linear prediction processes introduce stochastic errors.
Consequently, the variance for $\breve{\qn}_2$ can be given by
\begin{align}\label{eq:R2_breve0}
       \breve{\qR}_2 =& \mathbb{E} \left\{\breve{\qn}_2\breve{\qn}_2^H \right\} \notag\\
       =& \mathbb{E} \left\{(\breve{\qC}_2^{-1}\breve{\boldsymbol{\eta}}_2 - \breve{\qC}_2^{-1}\breve{\qD}_2 \breve{\qn}_{\mathrm{r},2})(\breve{\qC}_2^{-1}\breve{\boldsymbol{\eta}}_2 - \breve{\qC}_2^{-1}\breve{\qD}_2 \breve{\qn}_{\mathrm{r},2})^H \right\} \notag\\
       =& \breve{\qC}_2^{-1}\left(\breve{\sigma}^2_{\eta_2}\qI_{\lfloor \xi_2 M_1 \rfloor} + \breve{\qD}_2 \mathbb{E}\{\breve{\qn}_{\mathrm{r},2}\breve{\qn}_{\mathrm{r},2}^H\}\breve{\qD}_2^H \right)(\breve{\qC}_2^{-1})^H,
\end{align}
where $\breve{\qC}_2$ and $\breve{\qD}_2$ are obtained from $\breve{\qC}_1$ and $\breve{\qD}_1$ by replacing the linear prediction coefficients $\{c_1,c_2,\dots,c_P\}$ with $\{\bar{c}_1,\bar{c}_2,\dots,\bar{c}_{\bar{P}}\}$, respectively,
$\breve{\boldsymbol{\eta}}_2 \in \bbC^{\lfloor \xi_2 M_1 \rfloor \times 1}$ denotes the noise generated during the forward linear prediction process satisfying $\breve{\boldsymbol{\eta}}_2 \sim \mathcal{C}\mathcal{N}(\mathbf{0}_{\lfloor \xi_2 M_1 \rfloor \times 1}, \breve{\sigma}^2_{\eta_2}\qI_{\lfloor \xi_2 M_1 \rfloor})$,
and $\breve{\qn}_{\mathrm{r},2} \in \bbC^{\bar{P} \times 1}$ can be approximated as
\begin{align}
       \breve{\qn}_{\mathrm{r},2} \approx
       \begin{cases}
              \mathbf{0}_{\bar{P} \times 1}, & \bar{P} < \frac{M_1+1}{2},\\
              \left[\mathbf{0}_{1 \times (\bar{P}-\frac{M_1+1}{2})}, n_c, \mathbf{0}_{1 \times \frac{M_1-1}{2}}\right]^T, & \bar{P} \geq \frac{M_1+1}{2},\\
       \end{cases}
\end{align}
where $n_c$ denotes the zero-mean Gaussian interpolation error at the central virtual antenna with the variance of $\sigma_c^2 = \frac{1}{2}\breve{\sigma}^2_{\eta_2}$.
Then, $\breve{\qR}_2$ in \eqref{eq:R2_breve0} can be rewritten as
\begin{align}\label{eq:R2_breve}
       \breve{\qR}_2 =
       \begin{cases}
       \breve{\sigma}^2_{\eta_2}\breve{\qC}_2^{-1}(\breve{\qC}_2^{-1})^H, & \hspace{-0.5em}\bar{P} < \frac{M_1+1}{2},\\
       \breve{\qC}_2^{-1}\left(\breve{\sigma}^2_{\eta_2}\qI_{\lfloor \xi_2 M_1 \rfloor} + \sigma_c^2\breve{\qd}\breve{\qd}^H \right)(\breve{\qC}_2^{-1})^H, & \hspace{-0.5em}\bar{P} \geq \frac{M_1+1}{2},
       \end{cases}
\end{align}
where $\breve{\qd} \triangleq [\breve{\qD}_2]_{:,(\bar{P}-\frac{M_1-1}{2})}$.
Subsequently, the variance $\tilde{\qR}_2 = \mathbb{E} \{\tilde{\qn}_2\tilde{\qn}_2^H \}$ can also be derived as
\begin{align}\label{eq:R2_tilde}
       \tilde{\qR}_2 =
       \begin{cases}
       \tilde{\sigma}^2_{\eta_2}\tilde{\qC}_2^{-1}(\tilde{\qC}_2^{-1})^H, & \hspace{-0.5em}\bar{P} < \frac{M_1+1}{2},\\
       \tilde{\qC}_2^{-1}\left(\tilde{\sigma}^2_{\eta_2}\qI_{\lfloor \xi_2 M_1 \rfloor} + \sigma_c^2\tilde{\qd}\tilde{\qd}^H \right) (\tilde{\qC}_2^{-1})^H, & \hspace{-0.5em}\bar{P} \geq \frac{M_1+1}{2},
       \end{cases}
\end{align}
where $\tilde{\qC}_2$ and $\tilde{\qD}_2$ are obtained from $\tilde{\qC}_1$ and $\tilde{\qD}_1$ by replacing the linear prediction coefficients $\{c_1,c_2,\dots,c_P\}$ with $\{\bar{c}_1^{*},\bar{c}_2^{*},\dots,\bar{c}_{\bar{P}}^{*}\}$, respectively,
$\tilde{\qd} \triangleq [\tilde{\qD}_2^{*}]_{:,(\bar{P}-\frac{M_1-1}{2})}$,
and $\tilde{\sigma}^2_{\eta_2}$ denotes the variance for the noise generated during the backward linear prediction process, which completes the proof.

\section{Proof of Theorem\,\ref{the:CRBcorr}}\label{proof:the:CRBcorr}
According to {\it Theorem\,\ref{the:extra_signal_correlation_domain}}, the extrapolated observation vector in the correlation domain can be formulated as
\begin{align}
       \bar{\mathbf{y}}_2 = \boldsymbol{\mu}_2(\boldsymbol{\xi}_2) + \bar{\mathbf{n}}_2,
\end{align}
where $\boldsymbol{\mu}_2(\boldsymbol{\xi}_2) \triangleq \bar{\mathbf{A}}_{\gamma}\mathbf{g}$ denotes the deterministic noise-free signal component, $\bar{\mathbf{A}}_{\gamma}$ is the array manifold matrix parameterized by the effective elevation AoAs $\boldsymbol{\gamma} \triangleq [\gamma_1,\dots,\gamma_K]^T$ with $\gamma_k = 2\sin\theta_k$, and $\boldsymbol{\xi}_2 \triangleq [\boldsymbol{\theta}^T, \mathbf{g}^T]^T \in \mathbb{R}^{2K \times 1}$ is the real-valued parameter vector, and we have $\bar{\mathbf{n}}_2 \sim \mathcal{CN}(\mathbf{0}_{M_2 \times 1}, \mathbf{R}_{\bar{n},2})$ with $\mathbf{R}_{\bar{n},2} \triangleq \mathbb{E}[\bar{\qn}_2\bar{\qn}_2^H]$ denoting
\begin{align}\label{eq:R_n_corr}
       \mathbf{R}_{\bar{n},2} =&
       \begin{bmatrix}
              \tilde{\qR}_2 & \mathbf{R}_{\mathrm{b},2} & \mathbf{R}_{\mathrm{f,b},2} \\
              \mathbf{R}_{\mathrm{b},2}^H & \mathbf{R}_{\mathrm{c},2} & \mathbf{R}_{\mathrm{f},2} \\
              \mathbf{R}_{\mathrm{f,b},2}^H & \mathbf{R}_{\mathrm{f},2}^H & \breve{\qR}_2
       \end{bmatrix},
\end{align}
where $\mathbf{R}_{\mathrm{c},2} = \mathrm{diag}(0, \dots, 0, \sigma_c^2, 0, \dots, 0) \in \mathbb{C}^{M_1 \times M_1}$,
$\mathbf{R}_{\mathrm{f},2}$, $\mathbf{R}_{\mathrm{b},2}$, and $\mathbf{R}_{\mathrm{f,b},2}$ are derived following the identical steps in Appendix \ref{proof:the:CRBante}.
Then, based on the Slepian-Bangs formula, the FIM for $\boldsymbol{\xi}_2$ can be partitioned into a block structure separating the parameters of interest $\boldsymbol{\theta}$ from the nuisance parameters $\mathbf{g}$, as follows:
\begin{align}
       \mathbf{F}(\boldsymbol{\xi}_2) = \begin{bmatrix}
              \mathbf{F}_{\theta,2} & \mathbf{F}_{\theta, g} \\
              \mathbf{F}_{\theta, g}^T & \mathbf{F}_g
       \end{bmatrix},
\end{align}
where $\mathbf{F}_{\theta,2}$, $\mathbf{F}_g$, and $\mathbf{F}_{\theta, g}$ are given by
\begin{subequations}\label{eq:F_theta-g}
\begin{align}
       \mathbf{F}_{\theta,2} =& 2\boldsymbol{\Lambda}_g \Re\{(\dot{\mathbf{A}}_{\gamma}\mathbf{J})^H \mathbf{R}_{\bar{n},2}^{-1} \dot{\mathbf{A}}_{\gamma}\mathbf{J}\} \boldsymbol{\Lambda}_g,\\
       \mathbf{F}_{g} =& 2 \Re\{\bar{\mathbf{A}}_{\gamma}^H \mathbf{R}_{\bar{n},2}^{-1} \bar{\mathbf{A}}_{\gamma}\},\\
       \mathbf{F}_{\theta, g} =& 2\boldsymbol{\Lambda}_g \Re\{(\dot{\mathbf{A}}_{\gamma}\mathbf{J})^H \mathbf{R}_{\bar{n},2}^{-1} \bar{\mathbf{A}}_{\gamma}\}.
\end{align}
\end{subequations}
To isolate the Fisher information exclusively pertaining to the elevation AoAs $\boldsymbol{\theta}$, we apply the Schur complement to eliminate the nuisance parameter $\mathbf{g}$, yielding the equivalent FIM as
\begin{align}\label{eq:F_eq_corr0}
       \mathbf{F}_{\mathrm{eq,corr}} = \mathbf{F}_{\theta,2} - \mathbf{F}_{\theta, g} \mathbf{F}_g^{-1} \mathbf{F}_{\theta, g}^T.
\end{align}
Substituting \eqref{eq:F_theta-g} into \eqref{eq:F_eq_corr0}, we rewrite $\mathbf{F}_{\mathrm{eq,corr}}$ in \eqref{eq:F_eq_corr}, which completes the proof.

\footnotesize
\bibliographystyle{IEEEtran}
\bibliography{reference}
\end{document}